\newtheorem{theorem}{Theorem}
\newtheorem{definition}{Definition}
\newtheorem{lemma}{Lemma}
\newcommand{\squeezeup}{\vspace{-2.5mm}}
\begin{document}
\title{Jamming aided Generalized Data Attacks:\\ Exposing Vulnerabilities in Secure Estimation}
\author{\IEEEauthorblockN{Deepjyoti Deka,~ Ross Baldick ~and~ Sriram Vishwanath}
\IEEEauthorblockA{Department of Electrical \& Computer Engineering, The University of Texas at Austin\\
Email: deepjyotideka@utexas.edu, baldick@ece.utexas.edu, sriram@ece.utexas.edu }}

\maketitle

\begin{abstract}
Jamming refers to the deletion, corruption or damage of meter measurements that prevents their further usage. This is distinct from adversarial data injection that changes meter readings while preserving their utility in state estimation. This paper presents a generalized attack regime that uses jamming of secure and insecure measurements to greatly expand the scope of common `hidden' and `detectable' data injection attacks in literature. For `hidden' attacks, it is shown that with jamming, the optimal attack is given by the minimum feasible cut in a specific weighted graph. More importantly, for `detectable' data attacks, this paper shows that the entire range of relative costs for adversarial jamming and data injection can be divided into three separate regions, with distinct graph-cut based constructions for the optimal attack. Approximate algorithms for attack design are developed and their performances are demonstrated by simulations on IEEE test cases. Further, it is proved that prevention of such attacks require security of all grid measurements. This work comprehensively quantifies the dual adversarial benefits of jamming: (a) reduced attack cost and (b) increased resilience to secure measurements, that strengthen the potency of data attacks.
\end{abstract}

\section{Introduction}
State Estimation in a vital component for robust control of power system and efficient electricity market operations. It involves collection of measurements from meters distributed across the grid that are communicated through SCADA  (Supervisory Control and Data Acquisition) systems and then using them for determining the system state. Presence of faster sampling meters like phasor measurement units (PMUs) \cite{pmu1}  and Wide-Area Monitoring and Control Systems has led to greater data collection and heightened focus on reliable state estimation. This is because these new meters and their digital communication expose the grid to adversarial data attacks. Adversaries, often cyber in nature, can coordinately change meter readings to produce an incorrect state estimate that can subsequently result in grid failures and sub-optimal electricity prices. In fact, practical adversarial attacks have been widely studied in research (`Aurora' test attack \cite{aurora}, PMU timing attack \cite{todd}) and also reported in national media (cyberspying \cite{wallstreet}, `Dragonfly' virus \cite{dragonfly}). There has thus been a surge in recent research aimed at identifying power grid vulnerabilities and designing resilience to adversarial attacks.

The authors of \cite{hidden} were along the first to identify the problem of `hidden' data attacks that can change the state estimate by bypassing bad-data detection checks at the estimator. The central idea behind `hidden' attacks in \cite{hidden} is the design of a vector of data injections in the column space of the measurement matrix used in state estimation. Different adversarial goals (Eg. minimizing number of compromised measurements, minimum attack energy and cost) and operating conditions (Eg. type of measurements, power flow model, presence of secure measurements) have led to diverse research approaches to the problem of optimal attack construction. For adversaries interested in minimizing the number of measurement corruptions in a DC-power flow based estimator, \cite{poor} uses a $l_0 -l_1$ relaxation based framework to design the optimal `hidden' attack, while \cite{sou} uses mixed integer linear programming. For state estimation relying on voltage phasor and line flow measurements (collected from PMUs), \cite{deka,deka1} provide a graph-cut based `hidden' attack framework. Similarly, \cite{florian} discuss conditions for feasible data attack on a Kalman-Filter based estimator for AC power flow systems. For the related problem of  preventing data attacks, techniques discussed in literature include heuristic scheme \cite{thomas}, greedy schemes \cite{poor,deka1} among others.

Aside from the mentioned research on `hidden' attacks, a recent line of work has analyzed `detectable' data attacks that affect state estimation despite failing bad-data detection checks. An attacker in this case prevents the bad-data remover from removing some/all of the tempered measurements from the system. In this context, reference \cite{frame} demonstrates the construction of a basic `detectable' attack (termed `data integrity' attack) by using half of the measurements in the optimal `hidden' attack, and by damaging the rest. The state estimator here removes only the damaged measurements as bad-data while the other half manipulated by the adversary passes the bad-data detection test and causes the attack. Reference \cite{dekaISGT} generalizes this technique by creating `detectable' attacks from graph cuts that may include a minority of incorruptible measurements. This generalization produces even greater reduction in attack cost (minimum being $50\%$) over `hidden' attack costs. More importantly, the framework in \cite{dekaISGT} produces feasible `detectable' attacks in systems secure against all `hidden' attacks. In this paper, we analyze both attack regimes: `hidden' and `detectable' for adversaries that have an additional tool: measurement jamming.

By jamming, we refer to any adversarial action that prevents the state estimator from receiving or using a particular measurement. Jamming may be conducted by several practical techniques including wireless jammers, GPS spoofers, coordinated Denial of Service attack \cite{ddos} or even by physical damage to the device, meter and communication equipment \cite{wsj}. Though jamming attacks have been implemented in research, there are few studies analyzing their impact on constructing optimal adversarial attacks. References \cite{infocom} and \cite{dekaPESGM2015} use jamming of flow measurements with attack on breaker statuses to design topology attacks on state estimation. The authors of a recent paper \cite{dekasmartgridcomm2015} have used measurement jamming to design `detectable' attacks. However, \cite{dekasmartgridcomm2015} limits adversarial action to insecure measurements and leaves encrypted measurements untouched. Though secure/encrypted measurements are indeed secure against data injection, they are jammable (Eg. though meter damage). Including jamming of secure measurements into the attack framework thus generalizes `detectable' and `hidden' attacks, and enables a complete analysis of the effect of jamming on attack cost and grid resilience. This is the principal focus of this work.

We develop \textit{\textbf{a graph-theoretic framework to study generalized `hidden' and `detectable' data attacks by an adversary equipped with three techniques. They include: (a) jamming and (b) data injection in insecure measurements, and (c) jamming of secure measurements.}} The distinct costs of these techniques will depend on the adversarial instruments and algorithms used for their implementation and measurement security available in the grid. Despite the possible variation in exact costs, we show that the design of the optimal attack depends only on the relative costs of jamming and injection. In particular, we show that
\begin{itemize}
\item for `hidden' attacks, the optimal generalized attack is given by the solution to a minimum weight graph-cut problem on a weighted graph, for all permissible costs of jamming and data injection;
\item for `detectable' attacks, the range of costs for the jamming and data injection tools can be divided into three intervals based on their relative values (Fig.~\ref{fig:regions}). In each cost region, the optimal generalized attack is constructed by solving at most two minimum weight constrained graph-cut problems specific to that interval.
\end{itemize}
It needs to be mentioned that if jamming is limited to insecure measurements, the optimal `detectable' attack is described by two cost intervals \cite{dekasmartgridcomm2015} with one graph-cut problem each, unlike three cost intervals, each with two optimization problems here. As the constrained graph cut problems are in general not solvable in polynomial time, we give iterative min-cut based approximate algorithms that can be used for attack construction. Simulations on IEEE test cases elucidate cost improvements produced by our generalized attack framework over traditional data attacks.

The second significant result of this paper states that \textit{\textbf{our generalized attacks are feasible even in systems with only one insecure measurement.}} Preventing them requires extending security to all measurements. Our attack framework is thus more potent than previously studied `hidden' \cite{deka,deka1} and `detectable' attacks \cite{dekaISGT,dekasmartgridcomm2015} that can be prevented with much less number of secure measurements as detailed later.

The rest of this paper is organized as follows. The next section presents a description of the system models used in state estimation, bad-data removal and considered adversarial tools and attack types. Traditional `hidden' and `detectable' attack regimes that involve manipulation of insecure measurements are discussed in Section~\ref{sec:attack}. Next, our generalized attack framework for `hidden' and `detectable' attacks is presented in Section~\ref{sec:generalized} along with graph-theoretic formulations to study the effect of different adversarial costs on the optimal attack design. The algorithms to design the optimal `hidden' and `detectable' generalized attacks are given in Section~\ref{sec:algo}. Simulations of the proposed algorithms on IEEE bus systems for a range of jamming and bad-data injection costs and comparisons with existing work are shown in Section~\ref{sec:results}. Finally, concluding remarks are presented in Section~\ref{sec:conclusion}.

\section{State Estimation, Bad-Data Removal \& Adversarial Action}
\label{sec:estimation}
The power grid represents a set $V$ of $n$ buses (nodes) connected by a set $E$ of $|E|$ transmission lines (directed edges). As an example, the IEEE $14$ bus test system \cite{testsystem} is given in Figure~\ref{fig:14bus}.
\begin{figure}
\centering
\includegraphics[width=0.5\textwidth, height = .31\textwidth]{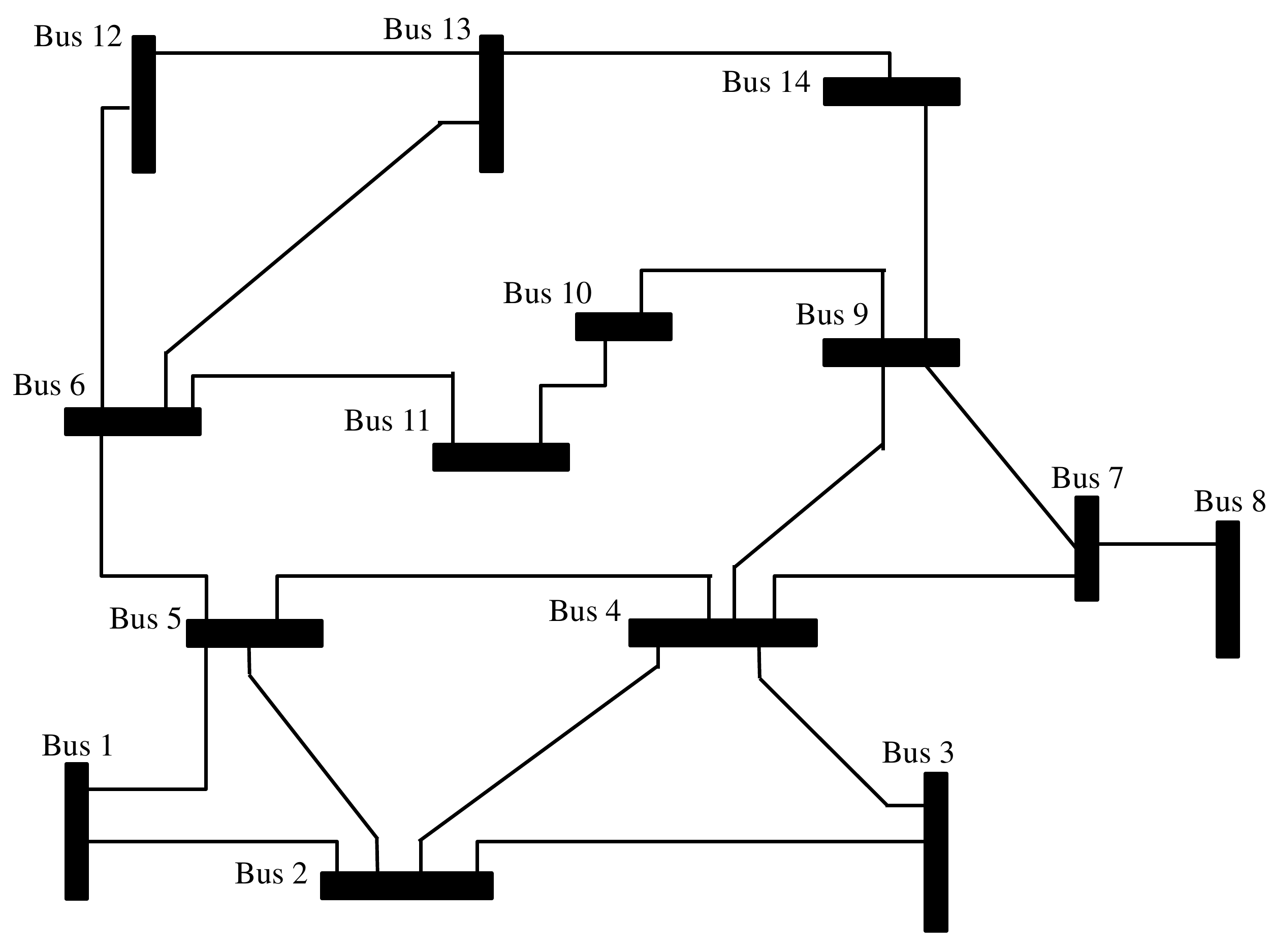}
\squeezeup
\squeezeup
\squeezeup
\caption{IEEE 14-bus test system \cite{testsystem}}
\label{fig:14bus}
\squeezeup
\end{figure}

\textbf{Measurement Model}: We use the DC power flow model \cite{bookDC} for the grid in this paper. Here, voltage magnitudes are assumed to be constant at unity on all buses and the state vector of the system comprises of all bus phase angles $x \in \mathbb{R}^{n}$. Transmission lines are assumed to be perfectly inductive (zero resistance) with a diagonal susceptance matrix $B$. We use $x_i$ to denote the phase angle at bus $i$ and $B_{ij}$ to denote the susceptance of line $(i,j)$. We consider a $m$ length measurement vector $z \in \mathbb{R}^m$  that comprises of  a) active power flows on lines and b) voltage phase angles on buses, collected from conventional meters and phasor measurement units in the grid. The relation between $z$ and $x$ is given by
\begin{align}
z = Hx + e \label{dcmodel}
\end{align}
where $H$ is the $m \times n$ full-ranked measurement matrix and $e$ is a zero mean Gaussian measurement noise vector with known covariance $\Sigma$. If the $k_1^{th}$ and $k_2^{th}$ entries (rows) in $z$ ($H$) measures the power flow on line $(i,j)$ and the phase angle at node $i$ respectively, then the DC power flow gives
\begin{align}
z(k_1) = B_{ij}(x(i)-x(j)),~ z(k_2) = x(i)\nonumber\\
H(k_1,:) = [0..0~~B_{ij}~~ 0..0~~-B_{ij}~~0..0] \label{line}\\
H(k_2,:) = [0..0~~1~~0..0~~0~~0..0] \label{bus}
\end{align}

Without a loss of generality, we introduce a ${n+1}^{th}$ `reference' bus with phase angle $0$ in the system and accordingly append $0$ to the state vector $x$. Note that the phase angle measurement at any bus $i$ is equivalent to a flow on a hypothetical line of unit conductance between bus $i$ and the `reference' bus. To represent this, we augment an additional column $h^{g}$ to matrix $H$ with value $-1$ for rows representing phase angles and $0$ otherwise. We thus convert every entry in $z$ into a flow measurement given by
\begin{align}
z = Hx = [H|h^g]\setlength{\arraycolsep}{2pt} \renewcommand{\arraystretch}{0.8}\begin{bmatrix} x \\0 \end{bmatrix} \nonumber
\end{align}
Note that the augmented measurement matrix has the structure of a susceptance weighted graph incidence matrix of rank $n$. From this point, we use $x$ and $H$ to denote the augmented state vector and measurement matrices respectively.\\
\textbf{State Estimation}: The complete DC state estimator used in this paper is given in Figure~\ref{estimator} \cite{monticelli,bookDC}.
\begin{figure}
\centering
\includegraphics[width=0.46\textwidth]{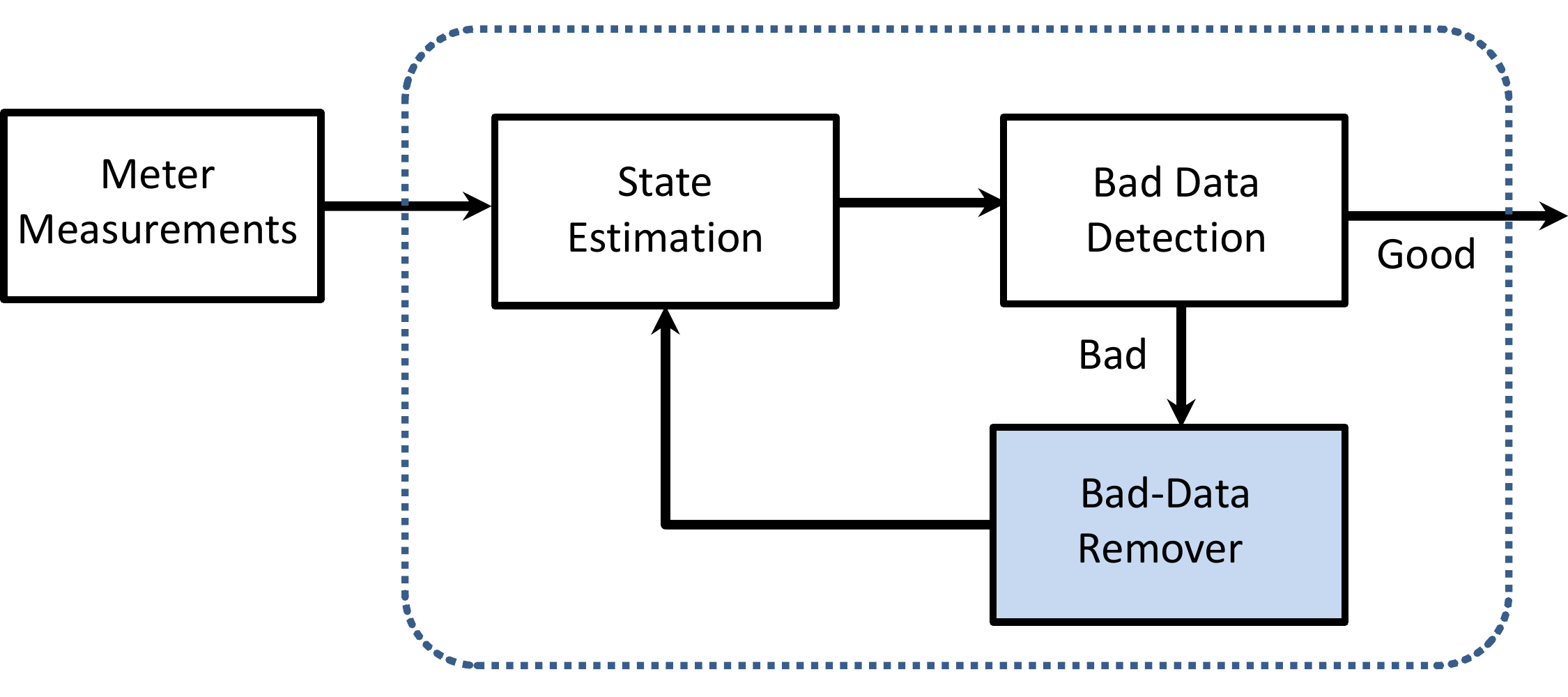}
\squeezeup
\caption{State Estimator for a power system \cite{monticelli,bookDC}}
\label{estimator}
\squeezeup
\end{figure}

The true state estimate $x^*$ is generated from measurement vector $z$ by a weighted least-square minimizer that minimizes the weighted residual's magnitude given by $J(x,z) = \|\Sigma ^{-.5}(z-Hx)\|_2$ over variable $x$. As shown in Fig.~\ref{estimator}, this step is followed by a threshold ($\lambda$) based bad-data detector that determines the presence of bad-data by the following test:
\begin{align}
\|\Sigma ^{-.5}(z-Hx^*)\|_2 &\leq \lambda ~~\text{accept~~~} x^*\nonumber\\
                            &> \lambda ~~\text{detect bad-data} \label{test}
\end{align}
If bad-data is detected, the bad-data remover is called to identify and remove bad-data as described below. \\
\textbf{Bad-data Removal:} Using basic linear algebra \cite{monticelli, bookDC}, it can be shown that the residual vector $r = z-Hx^* = [I - H(H^T\Sigma^{-1}H)^{-1}H^T\Sigma^{-1}]z$. Based on the assumption that probability of bad-data affecting greater number of locations is low, the estimator removes the minimum number of measurements such that the remaining measurements satisfy the bad-data check in Eq.~(\ref{test}). The optimal identification and removal scheme for multiple incorrect measurements is NP-hard \cite{monticelli, dekaISGT} and hence iterative or greedy schemes are used in practice. Unless otherwise stated, we assume that the unmanipulated measurement vector $z$ is clean and leads to estimation of the correct state vector $x^*$.\\

\textbf{Adversarial Tools and Attack Types:} Following past work in literature, we consider the adversary's goal to produce a non-zero change in the estimated state vector $x^*$ using an minimum cost attack. In reality, the adversary motivation may be economic (Eg. creating sub-optimal prices \cite{price}) or grid instability (Eg. producing/hiding grid failures) or be restricted to specific buses (Eg. targeted attacked \cite{deka1}). Keeping the adversarial goal as changing the state estimate analyzes the grid security in the strongest terms, where the grid controller is agnostic and gives equal weight to all adversaries.

We denote the secure set of measurements in $z$ that are encrypted against adversarial data injection by $S$. However, measurements in $S$ can suffer from bad-data arising from measurement noise. The remaining insecure measurements are denoted by set $S^c$. As stated in the Introduction, we consider three adversarial tools here. Among them, data injection is denoted by an additive vector $a$ that modifies the measurement vector $z$ to $z+a$. As secure measurements are immune to data injection, $a(i) = 0 ~\forall i \in S$. In contrast, jamming can be conducted on both secure and insecure measurements and is represented by removal of the jammed measurements from $z$. Let $p_I$, $p_J^{S^c}$ and $p_J^{S}$ denote the costs of data injection, jamming insecure measurements, and jamming secure measurements respectively. Further, a permissible set of costs are assumed to follow:\\
\textbf{Assumption $1$:} $p_J^{S^c} \leq p_J^{S} \leq p_I$\\
Note that data injection involves changing  meter measurements by precisely formatted real values and following communication protocols to ensure their usage at the state estimator. In constrast, jamming can be involved by physical \cite{wsj} or cyber destruction \cite{ddos} of the meter reading. Further, an adversary equipped with data injection can conduct jamming by inserting garbage values into the measurements. Thus, we assume that injection cost $p_I$ is not less that jamming costs. Secondly, jamming a secure measurement can be considered at least as costly as jamming an insecure measurement as secure measurements are encrypted and may require bypassing the resident security features leading to $p_J^{S^c} \leq p_J^{S}$. We assume the adversary to know/estimate these costs from the respective instrumentation and skills necessary for deployment. We show later that the attack construction depends on the relative values of these adversarial costs rather than their exact values.

A feasible attack refers to a successful attack; a feasible attack with minimum attack cost is called an optimal attack. We use \textbf{\textit{injection attacks}} to refer to attacks that use data injection alone. For attacks that additionally use jamming of insecure measurements, we use the phrase \textbf{\textit{jamming attacks}}. Attacks proposed in this work that use all three adversarial tools are termed \textbf{\textit{generalized attacks}}. Finally, we prefix the attack denotation by its `type'. The two types of attacks discussed in this paper are defined below.

\begin{definition} \label{attackdefine}
\item \textbf{`Hidden' attack} \cite{hidden}: This well-studied attack is not detected by the bad-data detector. The adversary ensures feasibility by manipulating measurements in a way such that the measurement residue remains unchanged.
\item \textbf{`Detectable' attack} \cite{frame,dekaISGT}: This attack initially fails the bad-data detection test but passes it after the estimator removes bad-data. The adversary ensures feasibility by manipulating measurements such that the minimum set of measurements that are removed to pass the detection test does not include all manipulated measurements.
\end{definition}

In the next section, we describe traditional attack frameworks (injection attacks and jamming attacks) that operate through insecure measurements only. This background will help analyze generalized data attacks in subsequent sections.

\section{Data Attacks using insecure measurements}
\label{sec:attack}
We analyze both `hidden' and `detectable' traditional (injection and jamming) attacks where the adversary is limited to attacking insecure measurements in set $S^c$. First, we focus on injection attacks.

\subsection{Injection Attacks}
\label{sec:injection}
Here, the adversary's strategy is entirely represented by the injection vector $a$ that is added to the measurement vector $z$. As data injection is the only tool available, its cost $p_I$ does not influence the attack construction. Consider the case of a `hidden' injection attack. As mentioned in Definition~\ref{attackdefine}, the attack is successful if it doesn't change the measurement residual. If $a = Hc \neq 0$ for some $c \in \mathbb{R}^{n+1}$, this holds  as $\|\Sigma ^{-.5}(z-Hx^*)\|_2 = \|\Sigma ^{-.5}(z+a-H(x^*+c))\|_2$ and the state estimate is modified to $x^*+c$. The optimal `hidden' injection attack is given by the sparest $a$ in the following \cite{deka,deka1}:
\begin{align} \label{hiddeninjection} \tag{H-I}
&\smashoperator[l]{\min_{c \in \mathbb{R}^{n+1}-\{\textbf{0}\}}} \|a\|_{0} \\
\text{s.t. ~} &a =Hc, c(n+1) = 0, ~a(i) = 0 ~\forall i \in S ~~(\text{$S$: Secure Set}) \nonumber
\end{align}
Next, we look at a `detectable' injection attack. By Definition~\ref{attackdefine} and the state estimator's bad-data removal technique described after Eq.~\ref{test}, it is clear that an injection vector $a \neq 0$ will successfully change the state estimate only if removal of some $k < \|a\|_0$ entries from the measurement vector is sufficient to pass the bad-data detection test, while preserving observability. We describe the construction of such an injection vector $a$ now. For any $Hc \neq 0$, include \textit{more than half} of the non-zero entries in $Hc$ in $a$ and replace the rest by $0$. Observe that $\|a\|_0 > \|Hc-a\|_0$ here. Thus, measurements corresponding to the non-zero terms in $(Hc-a)$ are incorrectly identified as bad-data instead of the injected measurements in $a$. After removal of bad-data from measurement vector and elimination of corresponding rows from the measurement matrix $H$, $a$ now lies in the column space of the modified measurement matrix and a feasible attack is conducted. The optimal measurements from $Hc$ to include in the attack vector $a$ are given by the unity terms in the optimal binary vector $d$ of the following \cite{dekaISGT,dekasmartgridcomm2015}:
\begin{align} \label{detectableinjection} \tag{D-I}
&\smashoperator[l]{\min_{d \in \{0,1\}^m, c \in \mathbb{R}^{n+1}- \{\textbf{0}\}}} \|d\|_{0} \\
\text{s.t. ~} &c(n+1) = 0, d(i) = 0 ~\forall i \in S \nonumber\\
& \|d\|_{0} > \|Hc\|_0/2 ~~(\text{for feasibility})\label{cond1} \\
& rank(DH) = n,~ diag(D) = \textbf{1} - (\textbf{1}-d)*(Hc)_{spty} \label{cond2}
\end{align}
Here, $a*b$ refers to the element-wise multiplication between vector $a$ and $b$, while $a_{spty}$ denotes the sparsity pattern in vector $a$. In the rank constraint (\ref{cond2}), $D$ is a $0-1$ diagonal matrix with value of $0$ for removed measurements. We now describe graph-theoretic solutions for attack construction for both attack types.

\textbf{Graph-Theoretic Solution}: We construct undirected graph $G_H$ with $n+1$ nodes and edges corresponding to measurement rows in $H$. We denote secure and insecure edges in $G_H$ corresponding to secure and insecure measurements in $H$ respectively. Due to the unimodular structure of $H$, it can be shown that the optimal solutions of Problems~\ref{hiddeninjection} or~\ref{detectableinjection} remain unchanged if $c$ is restricted to be a $0-1$ binary vector and $H$ is replaced by the unweighted incidence matrix $A_H$ of $G_H$. In this case, the non-zero terms in $A_Hc$ in fact correspond to a graph-cut in $G_H$ \cite{deka,dekaISGT}. Thus, the optimal attack design can be stated as a graph cut problem as described below:

\begin{theorem}\label{injectiondesign}
\item {\cite[Theorem 2]{deka}} The optimal `hidden' injection attack in Problem~\ref{hiddeninjection} is given by the minimum cardinality cut $C^*$ in $G_H$ with no secure edges.
\item {\cite[Theorem 2]{dekaISGT}} The optimal `detectable' injection attack in Problem~\ref{detectableinjection} is given by any $\lfloor1+ |C^*|/2\rfloor$ insecure edges in $C^*$, where $C^*$ denotes the minimum cardinality cut in $G_H$ with a minority of secure edges ($|C^* \cap S| < |C^*|/2 $).
\end{theorem}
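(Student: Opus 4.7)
The plan is to prove both parts by reducing the algebraic optimization problems (\ref{hiddeninjection}) and (\ref{detectableinjection}) to combinatorial graph-cut problems on $G_H$, exploiting the fact that the augmented measurement matrix $H$ is (up to nonzero susceptance scalings) the incidence matrix of $G_H$. The unifying observation is that for any $c\in\{0,1\}^{n+1}$, the support of $Hc$ is exactly the set of edges with one endpoint in $V_1:=\{v:c(v)=1\}$ and one in $V_0:=\{v:c(v)=0\}$, i.e. the cut $(V_1,V_0)$. Thus $\|Hc\|_0$ is the cardinality of this cut.

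First I would justify restricting the decision variable $c$ to $\{0,1\}^{n+1}$ without loss of optimality. For any feasible real $c$ with $c(n+1)=0$, I would consider its distinct level sets and decompose $c$ into a signed combination of indicator vectors $\mathbf{1}_{\{c>t\}}$ over the finite set of thresholds $t$. The support of $Hc$ is the union of the edges crossing some level set, so at least one such indicator $\tilde c$ produces $\mathrm{supp}(H\tilde c)\subseteq \mathrm{supp}(Hc)$ with $\tilde c(n+1)=0$ and $\tilde c\neq 0$. The secure-edge constraints are inherited because $H\tilde c$ vanishes wherever $Hc$ does. This standard incidence/unimodularity argument allows me to replace $H$ by the unweighted incidence matrix $A_H$ and restrict $c$ to nonzero binary vectors with $c(n+1)=0$.

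For part (i), under this reduction Problem (\ref{hiddeninjection}) becomes: minimize the number of edges in the cut $(V_1,V_0)$ over nonempty proper partitions of $V$ with the reference node in $V_0$, subject to no cut edge being secure (from $a(i)=0$ for $i\in S$). This is exactly the minimum-cardinality cut in $G_H$ avoiding secure edges, yielding the stated $C^*$. Conversely, any such cut produces a feasible $a=A_H c$ with $\|a\|_0$ equal to the cut size, so the two optima coincide.

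For part (ii), I would again use the binary reduction so that $Hc$ corresponds to a cut $C$, and $d$ selects a subset of edges of $C$ (the ``kept'' attack entries). The constraint $d(i)=0$ for $i\in S$ forces $d$ to select only insecure edges of $C$, so $\|d\|_0\le |C\cap S^c|$. Combined with the majority condition $\|d\|_0>|C|/2$ (Eq.~(\ref{cond1})), this forces $|C\cap S^c|>|C|/2$, i.e. $C$ must contain a strict minority of secure edges. Given such a $C$, the smallest admissible $\|d\|_0$ is $\lfloor 1+|C|/2\rfloor$, attained by choosing any $\lfloor 1+|C|/2\rfloor$ insecure edges of $C$. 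Minimizing over $C$ then gives the minimum-cardinality cut with a minority of secure edges, which is precisely $C^*$.

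The main obstacles I expect are two. The first is rigorously carrying out the level-set decomposition so that both the sparsity pattern and the secure-edge feasibility are preserved; this is routine for the incidence matrix but must be stated carefully. The second, which arises only in part (ii), is verifying the rank/observability constraint $\mathrm{rank}(DH)=n$ in Eq.~(\ref{cond2}): after removing the complement of $d$ within $C$ as ``bad data,'' the remaining measurements must still span $\mathbb{R}^n$. I would handle this by noting that the removed measurements form a strict subset of a single cut of $G_H$, and a spanning tree of $G_H$ missing only a strict minority of edges of one cut still contains a spanning subgraph of full rank, so observability is retained. Together with the sparsity analysis above, this yields the claimed characterization of the optimal ``detectable'' injection attack.
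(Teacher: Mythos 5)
Your overall route is the same as the paper's. The paper does not actually prove this theorem --- it imports both parts from \cite{deka} and \cite{dekaISGT} and only sketches the reduction in the ``Graph-Theoretic Solution'' paragraph: unimodularity of the incidence structure lets one restrict $c$ to $\{0,1\}^{n+1}$, after which the support of $Hc$ is a graph cut. Your level-set decomposition is the standard way to make that sketch rigorous (and is correct: for nonzero susceptances the support of $Hc$ is exactly the set of edges with $c(i)\neq c(j)$, every threshold cut is contained in it, and a threshold can always be chosen so the level set is a nonempty proper subset excluding the reference node). Your counting for part (ii) --- secure entries of $d$ forced to zero, hence $|C\cap S^c|>|C|/2$ is needed to satisfy $\|d\|_0>|C|/2$, with minimum $\|d\|_0=\lfloor 1+|C|/2\rfloor$ --- matches the paper's own derivation in Section III.

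The one genuine problem is your resolution of the rank constraint $\mathrm{rank}(DH)=n$. It is \emph{not} true that deleting a strict subset $R$ of the edges of a cut leaves a connected graph: in the path $a\!-\!b\!-\!c$, the cut isolating $b$ is $\{ab,bc\}$, and deleting the strict subset $\{ab\}$ disconnects $a$. The correct criterion is that $G_H-R$ is disconnected iff $R$ contains some cut $C'$ of $G_H$. If every deleted edge were insecure this could not happen, since $C'\subseteq R$ would be a cut with no secure edges and $|C'|\le |R|<|C^*|$, contradicting the minimality of $C^*$; but $R$ necessarily contains \emph{all} secure edges of $C^*$, and nothing prevents those secure edges (possibly together with some deleted insecure ones) from themselves forming a cut of $G_H$. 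Such a $C'$ has a majority of secure edges, so it does not contradict the minimality defining $C^*$, and in that configuration constraint (\ref{cond2}) fails and the claimed optimum is infeasible. So you either need to fold the rank condition into the class of cuts over which you minimize, or argue that some admissible choice of the $\lfloor 1+|C^*|/2\rfloor$ insecure edges avoids creating such a sub-cut; neither is supplied by your proposal (nor by this paper, which defers the issue to \cite{dekaISGT}). As written, this observability step is the one place where your proof does not go through.
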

It follows immediately that the cost of the optimal `detectable' injection attack is never greater than $.5+ 1/|C^*|$ times the cost of the optimal `hidden' injection attack $C^*$. Next, we add jamming of insecure measurements to the attack framework and discuss its implications.

\subsection{Jamming Attacks}
\label{sec:jamming}
Here the adversary can jam and remove insecure measurements at a cost $p_J^{S^c}$ in addition to injecting data at cost $p_I$. Note that for a non-zero change in state estimate, adversary should inject data into at least one insecure measurement. The design of the optimal `hidden' jamming attack is given by:
\begin{theorem}\label{hiddenjammingdesign}
The optimal `hidden' jamming attack for all permissible $p_I$ and $p_J^{S^c}$ is constructed by injecting data into one edge and jamming the remaining edges in the minimum cardinality cut in $G_H$ with no secure edges.
\end{theorem}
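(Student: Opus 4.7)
\textit{Proof plan.} The plan is to parametrize any feasible `hidden' jamming attack by a pair $(J,c)$, where $J \subseteq S^c$ is the set of jammed measurements and $c \in \mathbb{R}^{n+1}$ is a non-zero state perturbation with $c(n+1)=0$, and let the induced injected set be $I = \{i \in S^c \setminus J : (Hc)_i \neq 0\}$. Hiddenness forces $(Hc)_i = 0$ on every secure measurement and on every unjammed insecure measurement outside $I$, so $\mathrm{supp}(Hc) \subseteq J \cup I$ with $\mathrm{supp}(Hc) \cap S = \emptyset$. Exploiting the unimodular structure of the augmented $H$ (its interpretation as a signed graph incidence matrix of $G_H$), I would restrict $c$ to $0$--$1$ vectors exactly as in Theorem~\ref{injectiondesign}, after which $\mathrm{supp}(Hc)$ is a graph cut $C_0$ in $G_H$ containing no secure edges.

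Given this, I would rewrite the attack cost in terms of $C_0$ and $J$. Any edge of $J$ lying outside $C_0$ increases cost without contributing to hiddenness, so an optimal attack satisfies $J \subseteq C_0$ and $|I| = |C_0| - |J|$. The cost then simplifies to
\begin{align*}
p_I|I| + p_J^{S^c}|J| \;=\; p_I|C_0| - (p_I - p_J^{S^c})|J|,
\end{align*}
which by Assumption~1 is non-increasing in $|J|$. Because a non-zero change in the state estimate requires at least one injected measurement (as noted in Section~\ref{sec:jamming}), we must have $|I|\geq 1$, i.e., $|J| \leq |C_0|-1$. The optimal choice for a fixed $C_0$ is therefore $|J|=|C_0|-1$ and $|I|=1$, with cost $p_I + p_J^{S^c}(|C_0|-1)$.

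Finally I would minimize over admissible cuts $C_0$. Since $p_J^{S^c}\geq 0$, the quantity $p_I + p_J^{S^c}(|C_0|-1)$ is non-decreasing in $|C_0|$, so the minimum is attained at the minimum cardinality cut $C^*$ of $G_H$ with no secure edges, the very object identified in the first part of Theorem~\ref{injectiondesign}. The optimal `hidden' jamming attack thus injects data into any single edge of $C^*$ and jams the remaining $|C^*|-1$ edges, exactly matching the claim; note that this choice automatically preserves observability since removing strictly fewer than $|C^*|$ edges cannot disconnect $G_H$ by minimality of $C^*$.

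The main obstacle is not the arithmetic but the reduction from a general real perturbation $c$ to a $0$--$1$ vector representing a bipartition of the buses; this step is what turns $\mathrm{supp}(Hc)$ into a graph cut and lets us invoke the min-cut machinery. That reduction is inherited verbatim from the proof of Theorem~\ref{injectiondesign} via the unimodular structure of $H$. Once it is in place, the proof is linear-in-$|J|$ bookkeeping driven entirely by the ordering $p_J^{S^c} \leq p_I$ in Assumption~1, which is why the same construction is optimal across the full permissible cost range.
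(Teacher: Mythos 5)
Your proof is correct and follows essentially the same route as the paper, which only sketches the argument by invoking Theorem~\ref{injectiondesign} together with $p_J^{S^c}\leq p_I$ and asserting global optimality ``by contradiction.'' Your cost decomposition $p_I|C_0|-(p_I-p_J^{S^c})|J|$ over cuts $C_0$ with no secure edges, together with the constraint $|I|\geq 1$, is precisely the explicit form of that optimality argument, so this is a fleshed-out version of the paper's proof rather than a different approach.
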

\textbf{Brief Proof steps}: Using Theorem~\ref{injectiondesign} and $p_J^{S^c} \leq p_I$, it is clear that the least cost `hidden' jamming attack designed using the optimal `hidden' injection attack is given by Theorem~\ref{hiddenjammingdesign}. Its global optimality can be proved by contradiction.

Now we look at `detectable' jamming attacks as discussed in \cite{dekasmartgridcomm2015}. Consider a cut $C$ in graph $G_H$ with $n^S_C$ and $n^{S^c}_C$ secure and insecure edges respectively, with $n^{S^c}_C > n^S_C$. Using Theorem~\ref{injectiondesign}, $C$ is feasible for a `detectable' injection attack. If the adversary jams $k^C_J < n^{S^C}_C- n^S_C$ insecure measurements in $C$, the remaining $|C| - k^C_J$ measurements still constitute a feasible cut with a majority of insecure edges. The adversary can thus inject data into $\lfloor 1 + \frac{|C|-k^C_J}{2}\rfloor$ insecure edges in $C$ to conduct a successful `detectable' jamming attack of attack cost $p^C$ given by
 \begin{align}
 p^C &= p_J^{S^c}k^C_J+p_I\lfloor1 + \frac{|C|-k^C_J}{2}\rfloor\nonumber\\
 &= (p_J^{S^c}-p_I/2)k^C_J+p_I\frac{|C|+2-(|C|- k^C_J)\mod 2}{2}\label{cost1a}
 \end{align}
Note that if $p_J^{S^c} < p_I/2$ the attack cost $p^C$ of cut $C$ decreases with increasing $k^C_J$ and is lowest at $k^C_J= n^{S^c}_C-n^S_C-1$. Similarly, it can be shown that for $p_J^{S^c} \geq p_I/2$, the attack cost is minimum at $k^C_J=1 -|C|\mod 2$. Using these values of $k^C_J$ in the Eq.~(\ref{cost1a}) for attack cost leads to the following result on optimal attack design.
\begin{theorem}\label{detectablejammmingdesign}\cite[Theorem 2]{dekasmartgridcomm2015}
The construction of the optimal `detectable' jamming attack  with jamming cost $p_J^{S^c}$ and data injection cost $p_I$ for insecure measurements is given by:
\item $1$. \textbf{$p_J^{S^c} < p_I/2$:} Give weights of $p_I-p_J^{S^c}$ and $p_J^{S^c}$ to secure and insecure edges respectively in $G_H$ and find the minimum weight cut $C^*$ with $n_{C^*}^{S} < |C^*|/2$ secure edges. Use $1+n_{C^*}^{S}$ insecure edges for bad-data injection and jam the other insecure edges.
\item $2$. \textbf{$p_J^{S^c} \geq p_I/2$:} Find the minimum cardinality cut $C^*$ with a minority of secure edges in $G_H$. Use $\lfloor \frac{1+|C^*|}{2} \rfloor$ insecure edges for data injection and jam $(1 - |C^*|\mod 2)$ insecure edges.
\end{theorem}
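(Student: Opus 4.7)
The plan is to take the cost expression in Eq.~(\ref{cost1a}), which already captures the attack cost of a generic feasible cut $C$ as an affine function of the number $k_J^C$ of jammed insecure edges, and to optimize in two stages: first over $k_J^C$ with $C$ fixed, and then over the choice of cut $C$. The splitting into cases in the theorem is driven entirely by the sign of the coefficient $(p_J^{S^c} - p_I/2)$ multiplying $k_J^C$ in Eq.~(\ref{cost1a}), so the two regimes $p_J^{S^c}<p_I/2$ and $p_J^{S^c}\geq p_I/2$ arise naturally from that sign change.

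In the regime $p_J^{S^c}<p_I/2$, the cost is strictly decreasing in $k_J^C$, so I would push $k_J^C$ up to its largest feasible value $k_J^C = n_C^{S^c} - n_C^S - 1$ (one less than the slack that keeps a majority of the surviving edges insecure, as required by Theorem~\ref{injectiondesign} for detectable feasibility after jamming). Substituting this into Eq.~(\ref{cost1a}) and simplifying gives an attack cost of the form $p_J^{S^c}\, n_C^{S^c} + (p_I - p_J^{S^c})\, n_C^S$ up to an additive constant independent of $C$. Minimizing this over all cuts with minority secure edges is exactly the weighted min-cut problem stated in part~$1$ of the theorem, with the prescribed edge weights $p_I-p_J^{S^c}$ on secure edges and $p_J^{S^c}$ on insecure ones. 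The recipe of injecting into $1+n_{C^*}^S$ insecure edges and jamming the rest then follows by recovering $\lfloor 1 + (|C^*|-k_J^{C^*})/2\rfloor$ at this choice of $k_J^{C^*}$.

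In the regime $p_J^{S^c}\geq p_I/2$, the coefficient of $k_J^C$ is non-negative, so for each fixed $C$ the cost is minimized by taking $k_J^C$ as small as possible. Because the number of remaining edges must be odd to avoid wasting an extra injection (otherwise the floor function leaves a useless $+1$), the minimizer is $k_J^C = 1 - |C|\!\!\mod 2$. Plugging this into Eq.~(\ref{cost1a}) reduces the per-cut cost to $p_I\lfloor (|C|+1)/2\rfloor + p_J^{S^c}(1-|C|\!\!\mod 2)$. Then I would verify by a short case analysis on parity that this cost is monotone non-decreasing in $|C|$ under Assumption~$1$: the only nontrivial comparison is between an even cut of size $k^*$ and an odd cut of size $k^*+1$, whose cost difference is $p_I - p_J^{S^c}\geq 0$. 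Hence, among all cuts with a minority of secure edges, the minimizer is the minimum-cardinality one, yielding part~$2$.

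The main obstacle is establishing global optimality rather than optimality within the parametric family indexed by $(C,k_J^C)$. To close this gap I would argue, analogously to the contradiction step used in the proof sketch of Theorem~\ref{hiddenjammingdesign}, that any feasible `detectable' jamming attack must support a cut in $G_H$ in which the injected (and un-jammed) edges form a strict majority; otherwise the bad-data remover would strip the injected entries and the residual would collapse to zero. Any such feasible attack therefore fits the $(C,k_J^C)$ template, and cannot have cost lower than the $(C,k_J^C)$-optimum derived above. Combined with the two-case per-cut optimization, this yields the theorem.
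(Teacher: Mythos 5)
Your proposal matches the paper's own derivation: the paper likewise parametrizes attacks by a feasible cut $C$ and the number $k_J^C$ of jammed insecure edges, uses the sign of the coefficient $(p_J^{S^c}-p_I/2)$ in Eq.~(\ref{cost1a}) to set $k_J^C=n_C^{S^c}-n_C^S-1$ when $p_J^{S^c}<p_I/2$ and $k_J^C=1-|C|\bmod 2$ otherwise, and then minimizes the resulting per-cut cost over cuts with a minority of secure edges (deferring the remaining details, including exhaustiveness of the $(C,k_J^C)$ template, to the cited reference). Your added checks — the reduction to the weighted-cut objective with weights $p_I-p_J^{S^c}$ and $p_J^{S^c}$, the parity-based monotonicity in $|C|$ using $p_I\geq p_J^{S^c}$, and the majority-cut feasibility argument for global optimality — are correct and consistent with that derivation.
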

A detailed derivation of Theorem~\ref{detectablejammmingdesign} is given in \cite{dekasmartgridcomm2015}. The main argument is also elucidated through the example in Fig.~\ref{fig:feasibleattack}. To conclude, the range of permissible relative costs for jamming insecure measurements is thus separable into two intervals with distinct designs for optimal `detectable' jamming attack.

\begin{figure}[ht]
\centering
\includegraphics[width=0.50\textwidth]{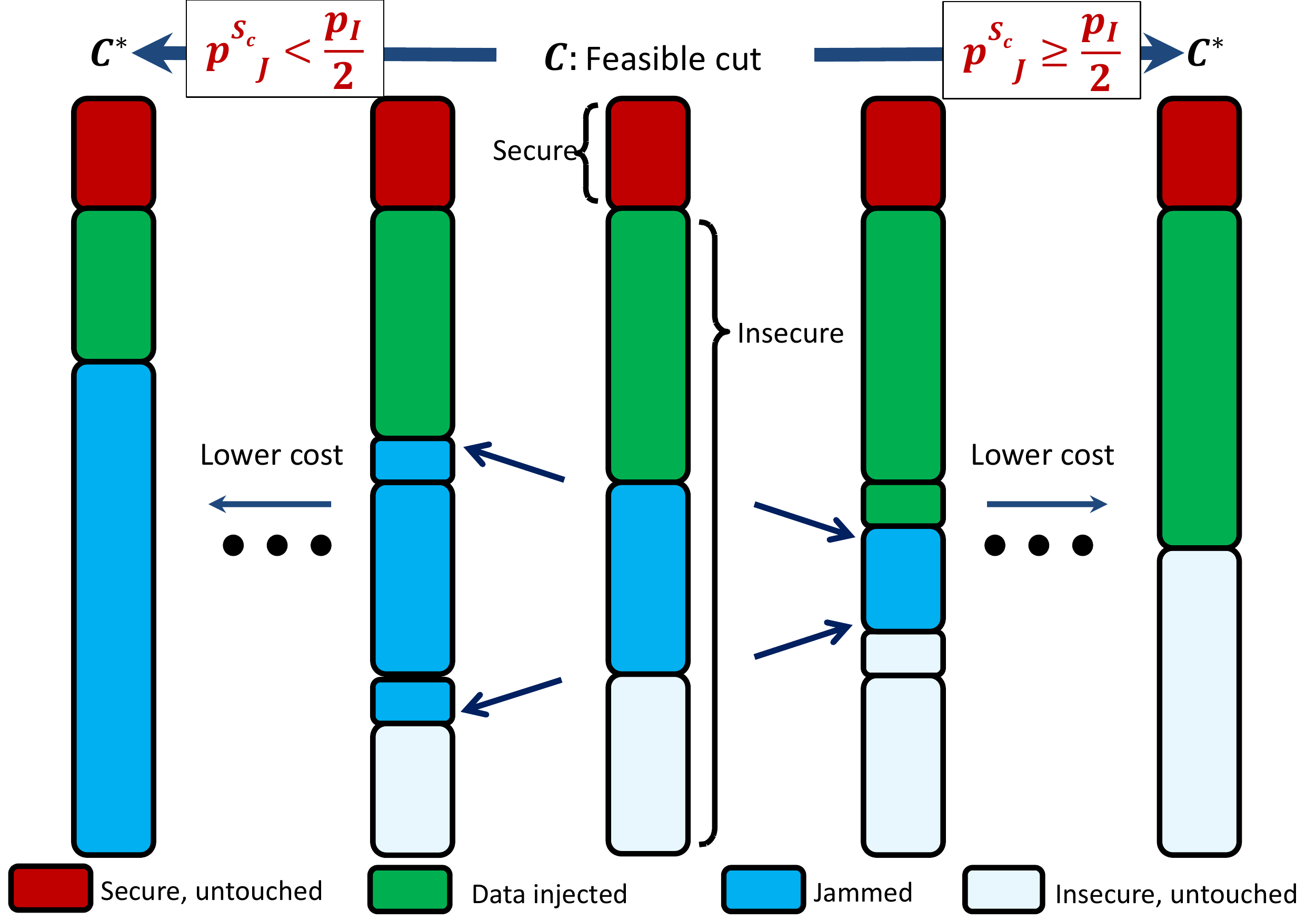}
\caption{Effect of jamming cost $p_J^{S^c}$ and data injection cost $p_I$ on the minimum `detectable' jamming attack $C^*$ derived from a cut $C$ with $n^S_C (< |C|/2)$ secure and $n^{S^c}_C$ insecure edges. Secure edges in the cut are colored red while untouched, jammed and injected insecure edges are colored white, blue and green colors respectively. When $p_J^{S^c}<p_I/2$, attack cost can be reduced by replacing one data injection with jamming two measurements as shown in the cuts on the left of $C$. For $p_J^{S^c} \geq p_I/2$, attack cost is reduced by replacing two jammed measurements by one measurement with data injection while leaving the other untouched as shown on the right side of cut $C$. Optimal attacks $C^*$ got from this replacement are given by Theorem~\ref{detectablejammmingdesign}.}
\label{fig:feasibleattack}
\end{figure}

In the next section, we present our generalized attack framework that allows jamming (not data injection) of secure measurements by the adversary.

\section{Data Attacks with Jamming secure measurements}
\label{sec:generalized}
The adversary in this case has three tools (jamming secure measurement, jamming insecure measurement, and data injection in insecure measurement) with distinct costs per measurement ($p_J^{S}, p_J^{S^c},$ and $p_I$). From Assumption $1$, we have $p_J^{S^c} \leq p_J^{S}\leq p_I$. The introduction of jamming of secure measurements creates major changes in the adversarial strategy as it relaxes the feasibility requirements for both `hidden' and `detectable' attacks as noted below.

\subsection{`Hidden' Generalized Attacks}
\label{sec:hiddengeneralized}
Theorem~\ref{injectiondesign} and~\ref{hiddenjammingdesign} states that feasible cuts for `hidden' injection and jamming attacks cannot include any secure edge. With the ability to jam secure measurements, this is no longer necessary. Consider a cut $C$ with $n^S_C$ secure and $n^{S^c}_C >0$ insecure edges. If all $n^{S}_C$ secure edges are removed by jamming, the remaining cut can provide a `hidden' attack where one insecure edge is used for data injection and the rest are jammed. The total attack cost is $p_J^{S}n^S_C+p_J^{S^c}n^{S^c}_C + (p_I-p_J^{S^c}) $. The optimal attack is thus given by:

\begin{theorem}\label{hiddengeneralizeddesign}
Give weights of $p_J^{S}$ and $p_J^{S^c}$ to secure and insecure edges respectively in $G_H$ and find the minimum weight cut $C^*$ with non-zero number of insecure edges. The optimal `hidden' generalized attack is constructed by using one insecure edge in $C^*$ for data injection and jamming the remaining cut-edges.
\end{theorem}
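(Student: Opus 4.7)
The plan is to establish the theorem in two pieces: first verify that the construction achieves a hidden generalized attack with the claimed cost, and then show by a matching lower bound that no feasible hidden generalized attack can do better, reducing the problem to minimizing a weighted cut in $G_H$.

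For feasibility, I would pick a binary $c^* \in \{0,1\}^{n+1}$ with $c^*(n+1)=0$ whose induced cut is $C^*$; the unimodularity argument underlying Theorem~\ref{injectiondesign} justifies this restriction. Fix any insecure edge $e^*$ in $C^*$, which exists by the non-zero insecure-edge constraint. After jamming every secure edge of $C^*$ together with every insecure edge of $C^*$ except $e^*$, the reduced measurement matrix $\tilde{H}$ satisfies $\tilde{H}c^*$ supported only on row $e^*$. Injecting the corresponding value on $e^*$ and $0$ elsewhere then yields an attack vector in the column space of $\tilde{H}$, so the post-jamming residual is preserved and the state shifts by $c^* \neq 0$. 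Summing the three cost contributions gives exactly $p_J^S|C^* \cap S| + p_J^{S^c}|C^* \cap S^c| + (p_I - p_J^{S^c})$.

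For optimality, I would take an arbitrary feasible hidden generalized attack specified by jamming sets $J_S \subseteq S$, $J_{S^c} \subseteq S^c$, an injection set $I \subseteq S^c \setminus J_{S^c}$, and a non-zero state perturbation $c$. Hiddenness forces $Hc$ to be supported within $I \cup J_S \cup J_{S^c}$ and to agree with the injection vector on $I$, with $I \neq \emptyset$ for a non-trivial attack. Invoking unimodularity again, I would restrict to binary $c$, so $\mathrm{supp}(Hc)$ is a graph cut $C$ of $G_H$; any jamming or injection outside $C$ strictly increases cost without contributing to the attack and can be discarded, leaving the cost
\begin{align*}
(p_I - p_J^{S^c})|I| + p_J^S|C \cap S| + p_J^{S^c}|C \cap S^c|.
\end{align*}
Since $p_I \geq p_J^{S^c}$ by Assumption~$1$, this is minimized at $|I|=1$ for every fixed $C$, and minimizing the remainder over cuts with at least one insecure edge is precisely the weighted min-cut problem stated in the theorem.

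The hard part will be the optimality direction: rigorously tying an arbitrary feasible hidden generalized attack to a single cut in the original graph $G_H$. The subtlety is that jamming deletes rows of $H$, so one must argue that the effective hiddenness condition $a = \tilde{H}c$ lifts back to a support condition on $Hc$ in the original, un-jammed graph, and that the binary reduction still goes through in the presence of mixed jamming and injection. Once this structural identification is in hand, comparing the attack cost with the weighted-cut objective is a routine case analysis that applies uniformly across all permissible cost regimes allowed by Assumption~$1$.
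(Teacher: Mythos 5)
Your proposal is correct and follows essentially the same route as the paper: the feasibility construction (jam all secure cut-edges and all but one insecure cut-edge, inject on the remaining one, yielding cost $p_J^{S}n^S_C+p_J^{S^c}n^{S^c}_C + (p_I-p_J^{S^c})$) is exactly the argument the paper gives in the paragraph preceding the theorem. The matching lower bound you supply --- hiddenness forcing $\mathrm{supp}(Hc)$ into the jammed/injected set, the binary reduction via unimodularity, and $|I|=1$ being optimal since $p_I \geq p_J^{S^c}$ --- is the optimality direction the paper leaves implicit, and your own sketch already handles the ``hard part'' you flag at the end.
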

Note that the optimal attack design here has the same form for all relative values of jamming and injections costs. Next, we look at `detectable' generalized attacks.

\subsection{`Detectable' Generalized Attacks}
\label{sec:detectablegeneralized}
We study how the design of a `detectable' attack changes when jamming of secure measurements is allowed. To do so, we consider a cut $C$ in graph $G_H$ with $n^S_C$ secure and $n^{S^c}_C$ insecure edges. We can have two cases for $C$: A) $n^S_C <n^{S^c}_C$ and B) $n^S_C \geq n^{S^c}_C$. Theorem~\ref{injectiondesign} and~\ref{detectablejammmingdesign} state that to conduct a successful `detectable' injection or jamming attack, the adversary requires graph-cuts with a majority of insecure edges (Case A). Thus, we have
\begin{lemma}\label{lemma1}
A `detectable' generalized attack can be constructed from cut $C$ having $n^S_C$ secure and $n^{S^c}_C$ insecure edges if $n^{S^c}_C >0$ and the adversary initially jams $k^S_C \geq [n^S_C - n^{S^c}_C + 1]^{+}$ secure cut-edges, where $[a]^+ = \max\{0,a\}$.
\end{lemma}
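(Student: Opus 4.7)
The plan is to establish sufficiency by explicit construction: given a cut $C$ in $G_H$ with $n^{S^c}_C > 0$ insecure edges, I will show that after the adversary jams any $k^S_C \geq [n^S_C - n^{S^c}_C + 1]^+$ secure cut-edges, the remaining cut-edges support a feasible `detectable' injection attack of the form characterized in Theorem~\ref{injectiondesign}(2), which together with the initial jamming constitutes a full `detectable' generalized attack.

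First I would observe that jamming a secure measurement simply deletes the corresponding row from $H$, i.e., removes the associated edge from $G_H$. After the adversary jams the $k^S_C$ chosen secure edges in $C$, the \emph{residual cut} separating the same bipartition of vertices consists of $(n^S_C - k^S_C)$ secure edges and $n^{S^c}_C$ insecure edges. This residual set is still a valid graph-cut of the reduced graph, because a cut is determined by its vertex bipartition and deleting edges among the cut-edges does not change that bipartition. Next, the hypothesis $k^S_C \geq [n^S_C - n^{S^c}_C + 1]^+$ gives
\[
n^S_C - k^S_C \;\leq\; n^{S^c}_C - 1 \;<\; n^{S^c}_C,
\]
so the residual cut has a strict minority of secure edges, and since $n^{S^c}_C > 0$ there is at least one insecure edge available for data injection. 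Theorem~\ref{injectiondesign}(2) then immediately certifies a feasible `detectable' injection attack on the residual cut, namely injecting data into any $\lfloor 1 + (|C|-k^S_C)/2 \rfloor$ of its insecure edges. Combined with the jammed secure edges, this yields the desired generalized attack.

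The main (mildly subtle) obstacle is verifying that the rank/observability condition~(\ref{cond2}) in Problem~(\ref{detectableinjection}) is preserved after the jammed rows are removed from $H$. This follows from the unimodular incidence structure of $H$ exploited in Theorem~\ref{injectiondesign}: the manipulation direction corresponding to the vertex bipartition of $C$ lies in the column space of the reduced measurement matrix as long as the residual cut still contains a strict majority of insecure edges, which is exactly what the lower bound on $k^S_C$ guarantees. The case $n^S_C < n^{S^c}_C$ reduces trivially to $k^S_C \geq 0$, recovering the classical `detectable' injection attack of Theorem~\ref{injectiondesign}(2) as expected, and confirming that the $[\,\cdot\,]^+$ truncation in the lemma is tight.
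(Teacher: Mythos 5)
Your proposal is correct and follows essentially the same route as the paper: the paper's justification is exactly that jamming $k^S_C \geq [n^S_C - n^{S^c}_C + 1]^{+}$ secure cut-edges leaves a residual cut with a strict minority of secure edges (and at least one insecure edge), to which the `detectable' injection attack construction of Theorem~1, part~2, applies. Your additional checks — that the residual edge set is still a cut of the reduced graph, that observability is preserved via the incidence structure, and that enough insecure edges remain for the required injections — are details the paper leaves implicit (deferring to a figure), but they do not change the argument.
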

This step ensures that after removal of $k^S_C$ jammed secure measurements, the remaining cut has a majority of insecure edges as shown in Fig.~\ref{fig:jammingsecure}. Further, jamming of secure edges can lead to a reducing in attack cost as well. For example, if $p_J^{S^c}+p_J^S \leq p_I$, a feasible cut $C$'s data injected insecure edge can be replaced with jamming of two edges in $C$, one secure and another insecure to lower the attack cost. This is demonstrated by the cut on the right side in Fig.~\ref{fig:jammingsecure}. 
\begin{figure}[ht]
\centering
\includegraphics[width=0.50\textwidth]{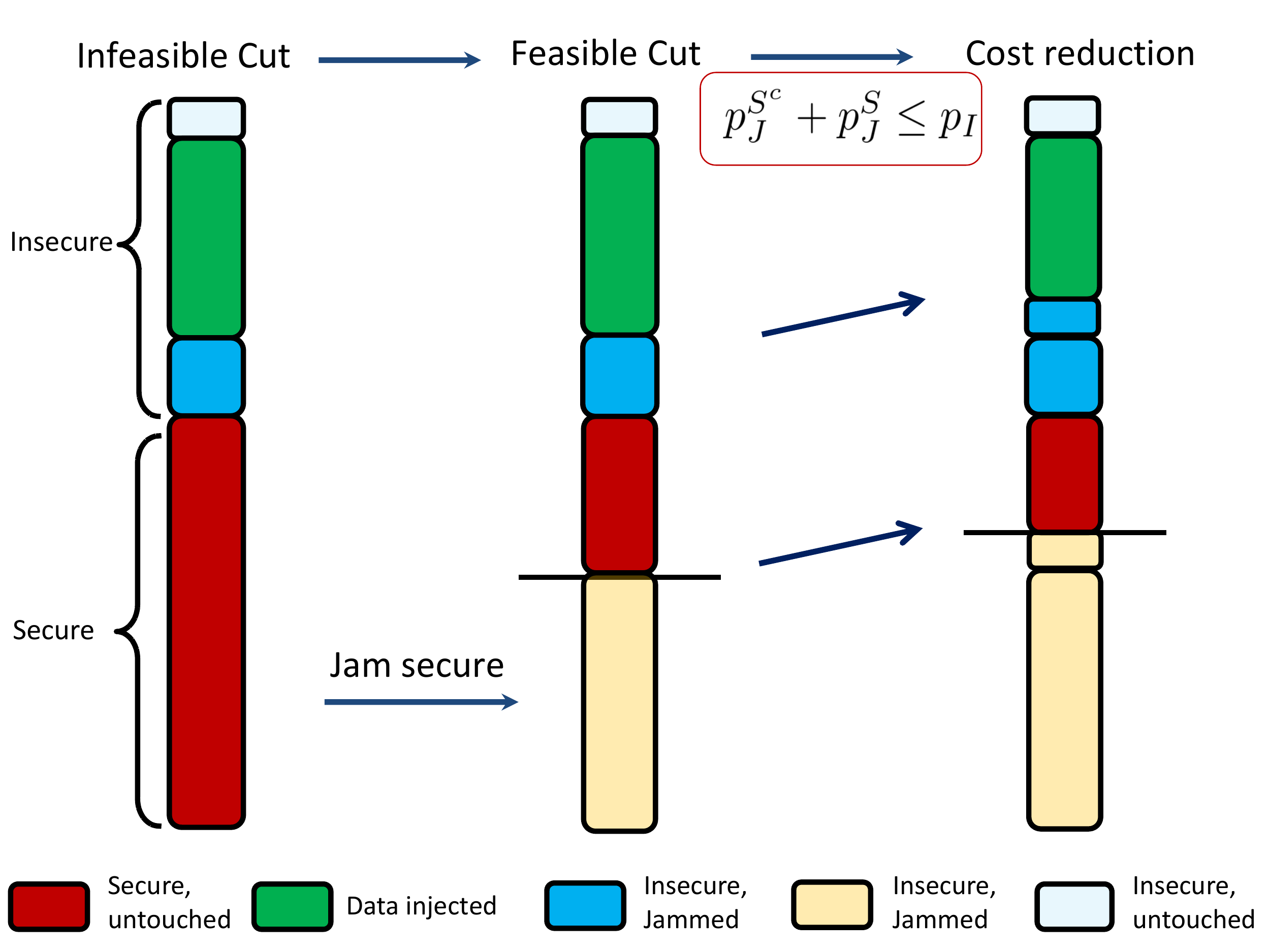}
\caption{Effect of jamming secure measurements on design of `detectable' generalized attacks. The cut on the left is infeasible due to a minority of insecure edges. Jamming secure measurements leads to a feasible cut in the middle. Finally for $p_J^{S^c}+p_J^S \leq p_I$, attack cost can be reduced by replacing one data injected edge with two jammed edges (one secure and one insecure).}
\label{fig:jammingsecure}
\end{figure}

To analyze the effect of jamming cost for secure edges, we follow the approach in  Theorem~\ref{detectablejammmingdesign}. We aim to determine the optimal `detectable' generalized attack strategy over different range of costs for $p_I,p_J^S$ and $p_J^{S^c}$. We begin with the following cost interval.\\ \\
\textbf{Cost Interval I}:$[p_J^{S_c}\geq p_I/2] \bigcap [p_J^{S} \geq p_I/2]$\\ Using Theorem~\ref{detectablejammmingdesign} for $p_J^{S_c}\geq p_I/2$, the minimum cost attack using the remaining $|C|-k^S_C$ edges is constructed by injecting data into $\lfloor\frac{1 + |C|- k^{S}_C}{2}\rfloor$  and jamming $(1 - (|C|- k^{S}_C)\mod 2)$ insecure edges. The total cost is given by:
\begin{align}
p^C = p_J^{S}k^S_C + p_I\lfloor\frac{1 + |C|- k^{S}_C}{2}\rfloor + p_J^{S^c}(1 - (|C|- k^{S}_C)\mod2) \label{cost}
\end{align}
As $p_J^{S} \geq p_I/2$, we note that $p^C $ is increasing with $k^S_C$. Using Lemma~\ref{lemma1}, the minimum cost is achieved at $k^S_C = [n^S_C - n^{S^c}_C + 1]^{+}$. For Case A ($n^S_C <n^{S^c}_C$), this gives $k^S_C = 0$ (no jamming of secure measurement), and the optimal attack is identical in structure to the optimal `detectable' jamming attack for $[p_J^{S_c} \geq p_I/2]$ in Theorem~\ref{detectablejammmingdesign}. For Case B ($n^S_C \geq n^{S^c}_C$), the optimal $k^S_C$ equals $n^S_C - n^{S^c}_C + 1$. The attack cost thus reduces to
\begin{align}
p^C &= p_J^{S}(n^S_C - n^{S^c}_C + 1) + p_In^{S^c}_C ~~(\text{using Eq.~(\ref{cost})})\nonumber\\
    &= p_J^{S}n^S_C + (p_I -p_J^{S})n^{S^c}_C + p_J^{S}\label{cost1}
\end{align}
Excluding the constant term, this optimal attack cost for $C$ in Case B is equal to its cut-weight if secure and insecure edges are given weights $p_J^{S}$ and $p_I - p_J^{S}$ respectively.

As $G_H$ has cuts in both Case A and Case B, the optimal `generalized' attack selects the minimum cost one among the optimal attacks for Cases A and B. This is summarized below:
\begin{theorem}\label{detectablegeneralizeddesignI}
The optimal `detectable' generalized attack in $G_H$ for the cost interval $[p_J^{S_c}\geq p_I/2] \bigcap [p_J^{S} \geq p_I/2]$ is given by the minimum cost attack among the optimal solutions of the following two graph optimization problems:
\item \emph{Problem I-A}. Find the minimum cardinality feasible cut $C^*$ in $G_H$ with a minority of secure edges. Use $\lfloor (1+|C^*|)/2 \rfloor$ insecure edges for bad-data injection and jam $(1 - |C^*|\mod 2)$ insecure edges.
\item \emph{Problem I-B}. Give weights of $p_J^{S}$ and $p_I - p_J^{S}$ to secure and insecure edges respectively in $G_H$ and find the minimum weight cut $C^*$ with $(n_{C^*}^{S} \geq |C^*|/2)$ secure edges and $(n_{C^*}^{S^c} > 0)$ insecure edges. Inject data into all insecure edges and jam $(n_{C^*}^{S}+1- n_{C^*}^{S^c})$ secure edges.
\end{theorem}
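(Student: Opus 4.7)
The plan is to establish Theorem~\ref{detectablegeneralizeddesignI} by taking an arbitrary cut $C$ in $G_H$ with $n^S_C$ secure and $n^{S^c}_C>0$ insecure edges, determining the optimal number $k^S_C$ of secure edges to jam from $C$ in the given cost regime, and then optimizing over all cuts. I would first note that, by Lemma~\ref{lemma1}, $k^S_C$ must satisfy $k^S_C\geq [n^S_C-n^{S^c}_C+1]^{+}$ to ensure the residual $|C|-k^S_C$ edges have a strict insecure majority. For any feasible $k^S_C$, the residual sub-cut is treated as a standalone `detectable' jamming attack, and since $p_J^{S^c}\geq p_I/2$ we can invoke Case~2 of Theorem~\ref{detectablejammmingdesign} to obtain the exact allocation of data injection and insecure jamming; this reproduces the cost formula in Eq.~(\ref{cost}).

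The next step is to show $p^C$ is non-decreasing in $k^S_C$ throughout Cost Interval I, so that the lower bound $k^S_C=[n^S_C-n^{S^c}_C+1]^{+}$ is binding. I would argue this by a two-case parity analysis of $|C|-k^S_C$. When $|C|-k^S_C$ is even, incrementing $k^S_C$ keeps the number of injections fixed but removes one insecure jam, changing the cost by $p_J^{S}-p_J^{S^c}\geq 0$ (using $p_J^{S^c}\leq p_J^{S}$ from Assumption~$1$). When $|C|-k^S_C$ is odd, incrementing $k^S_C$ drops one injection and adds one insecure jam, changing the cost by $p_J^{S}+p_J^{S^c}-p_I\geq 0$ (using $p_J^{S},p_J^{S^c}\geq p_I/2$). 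Both sub-cases are non-negative exactly in the stated interval, giving monotonicity.

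With monotonicity in hand, I would split on whether $C$ falls in Case~A ($n^S_C<n^{S^c}_C$) or Case~B ($n^S_C\geq n^{S^c}_C$). In Case~A the optimal $k^S_C=0$ so no secure edges are touched and the per-cut attack coincides with the optimal `detectable' jamming attack of Theorem~\ref{detectablejammmingdesign}; minimizing over all Case~A cuts gives exactly Problem~I-A. In Case~B the optimal $k^S_C=n^S_C-n^{S^c}_C+1$ makes the residual cut size $2n^{S^c}_C-1$, which is odd, so all $n^{S^c}_C$ insecure edges are injected and no insecure edges are jammed; substituting into Eq.~(\ref{cost}) produces the expression in Eq.~(\ref{cost1}), whose variable part is a weighted cut with weights $p_J^{S}$ on secure and $p_I-p_J^{S}$ on insecure edges, subject to $n^S_{C^*}\geq |C^*|/2$ and $n^{S^c}_{C^*}>0$; this is Problem~I-B. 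The global optimum over $G_H$ is therefore the cheaper of the two sub-problem optima.

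The main obstacle is the monotonicity step: because Eq.~(\ref{cost}) contains a floor and a mod with parity-dependent behavior, the per-increment cost difference must be verified separately in both parity sub-cases, and the two resulting inequalities are what single out Cost Interval~I as the precise regime where both optimizations I-A and I-B arise naturally. The rest reduces to algebraic substitution and to pairing Cases~A, B with the feasibility constraints already encoded in Problems~I-A and I-B.
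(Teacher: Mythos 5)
Your proposal is correct and follows essentially the same route as the paper: lower-bound $k^S_C$ via Lemma~\ref{lemma1}, apply Theorem~\ref{detectablejammmingdesign} to the residual cut to obtain Eq.~(\ref{cost}), show the cost is non-decreasing in $k^S_C$ throughout Interval I, and split into Cases A and B to arrive at Problems I-A and I-B. Your parity-by-parity verification of monotonicity is in fact more careful than the paper's one-line assertion that $p_J^{S}\geq p_I/2$ suffices, since the even sub-case genuinely relies on $p_J^{S}\geq p_J^{S^c}$ from Assumption~$1$ and the odd sub-case on $p_J^{S}+p_J^{S^c}\geq p_I$.
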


Next we analyze cut $C$ with $n^S_C$ secure and $n^{S^c}_C >0$ insecure edges in the second cost interval.\\\\
\textbf{Cost Interval II}: $[p_J^{S_c} < p_I/2] \bigcap [p_J^{S}+ p_J^{S_c} \geq p_I]$\\
By Lemma~\ref{lemma1}, the adversary initially jams $k^S_C \geq [n^S_C - n^{S^c}_C + 1]^{+}$ secure cut-edges leaving $(n^S_C - k^S_C)$ secure and $n_{C^*}^{S^c}$ insecure edges.  As $p_J^{S_c} < p_I/2$, the minimum cost attack constructed from the remaining edges includes data injection into $n^S_C - k^S_C+1$ measurements and jamming the rest of the insecure measurements (see Theorem~\ref{detectablejammmingdesign}). This gives an attack cost of:
\begin{align}
p^C &= p_J^{S}k^S_C + p_I(n^S_C - k^S_C+1) + p_J^{S^c}(n^{S^c}_C - n^S_C + k^S_C-1) \nonumber\\
& = (p_J^{S}+ p_J^{S^c}-p_I)k^S_C + (p_I-p_J^{S^c})(n^S_C+1) + p_J^{S^c}n^{S^c}_C\label{cost2}
\end{align}
As $p_J^{S}+ p_J^{S_c} \geq p_I$, the attack cost in Eq.~(\ref{cost2}) increases with $k^S_C$. The minimum attack cost is thus attained for Case A ($n^S_C <n^{S^c}_C$) at $k^S_C = 0$, and for Case B ($n^S_C \geq n^{S^c}_C$) at  $k^S_C =  n^S_C - n^{S^c}_C + 1$. The corresponding attack costs are given by:
\begin{align}
p^C &= (p_I-p_J^{S^c})(n^S_C+1) + p_J^{S^c}n^{S^c}_C ~~(\text{for Case A})\label{cost3}\\
p^C &= p_J^{S}(n^S_C+1) + (p_I - p_J^{S})n^{S^c}_C   ~~(\text{for Case B})\label{cost4}
\end{align}
Observe that in either case, ignoring additive constants, the optimal attack cost is given by the cut-weight of $C$ with distinct weights for secure and insecure measurements. We can thus determine the optimal `detectable' generalized attack in this interval as follows:

\begin{theorem}\label{detectablegeneralizeddesignII}
The optimal `detectable' generalized attack in $G_H$ for the cost interval $[p_J^{S_c} < p_I/2] \bigcap [p_J^{S}+ p_J^{S_c} \geq p_I]$ is given by the minimum cost attack among the optimal solutions of the following two graph optimization problems:
\item \emph{Problem II-A}. Give weights of $p_I - p_J^{S^c}$ and $p_J^{S^c}$ to secure and insecure edges respectively in $G_H$ and find the minimum weight cut $C^*$ with ($n_{C^*}^{S} < |C^*|/2$) secure edges. Inject data into $(n_{C^*}^{S}+1)$ insecure edges and jam the other insecure edges.
\item \emph{Problem II-B}. Give weights of $p_J^{S}$ and $p_I - p_J^{S}$ to secure and insecure edges respectively in $G_H$ and find the minimum weight cut $C^*$ with $(n_{C^*}^{S} \geq |C^*|/2)$ secure edges and $(n_{C^*}^{S^c} > 0)$ insecure edges. Inject data into all insecure edges and jam $(n_{C^*}^{S}+1- n_{C^*}^{S^c})$ secure edges.
\end{theorem}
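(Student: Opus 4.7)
The plan is to follow the same fix-and-optimize strategy used to derive Theorem~\ref{detectablejammmingdesign} and Theorem~\ref{detectablegeneralizeddesignI}. I will first fix an arbitrary cut $C$ of $G_H$ with $n^S_C$ secure and $n^{S^c}_C>0$ insecure edges, determine the cheapest `detectable' generalized attack realizable on $C$ under the cost-interval-II assumptions, express that per-cut cost as a weighted cut-value (up to a $C$-independent additive constant), and then minimize over cuts.

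For a fixed $C$, Lemma~\ref{lemma1} forces the adversary to jam $k^S_C \ge [n^S_C - n^{S^c}_C + 1]^{+}$ secure cut-edges before the surviving cut-edges can serve as a feasible `detectable' attack. Since $p_J^{S^c} < p_I/2$, I can invoke Theorem~\ref{detectablejammmingdesign} on the residual $(|C|-k^S_C)$-edge cut: its optimal completion injects data into $n^S_C - k^S_C + 1$ insecure edges and jams the remaining $n^{S^c}_C - n^S_C + k^S_C - 1$ insecure edges. Summing the three cost terms reproduces Eq.~(\ref{cost2}); crucially, its $k^S_C$-coefficient is $(p_J^{S}+p_J^{S^c}-p_I)$, which is non-negative precisely under the interval-II hypothesis. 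Hence the per-cut cost is non-decreasing in $k^S_C$ and is minimized at the smallest $k^S_C$ permitted by Lemma~\ref{lemma1}.

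Next I split on whether $C$ lies in Case A ($n^S_C < n^{S^c}_C$, so the minimum $k^S_C=0$) or Case B ($n^S_C \ge n^{S^c}_C$, so the minimum $k^S_C = n^S_C - n^{S^c}_C + 1$). Substituting into Eq.~(\ref{cost2}) yields Eq.~(\ref{cost3}) and Eq.~(\ref{cost4}) respectively; in each case, up to an additive constant independent of $C$, the per-cut cost is the weight of $C$ when secure and insecure edges carry weights $(p_I-p_J^{S^c},\, p_J^{S^c})$ in Case A and $(p_J^{S},\, p_I-p_J^{S})$ in Case B. Minimizing over all Case-A cuts (those with strict minority of secure edges) gives Problem II-A together with its injection/jamming prescription, and minimizing over all Case-B cuts containing at least one insecure edge gives Problem II-B with its prescription. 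The overall optimum is the cheaper of the two minima, which is exactly the theorem's claim.

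The main subtlety — and the step I would argue most carefully — is verifying that no cheaper realization of $C$ exists outside the $k^S_C$-parameterized family considered above. This reduces to two observations: (i) once $k^S_C$ secure edges are jammed, Theorem~\ref{detectablejammmingdesign} already fully characterizes the optimal mix of insecure jamming and injection on the remaining edges in the $p_J^{S^c}<p_I/2$ regime, so no alternative insecure-side allocation can do better; and (ii) in Case B, Lemma~\ref{lemma1} shows that actively jamming (rather than merely discarding) secure edges is \emph{required} for feasibility, so one cannot sidestep the constant $p_J^S(n^S_C - n^{S^c}_C + 1)$. Together these rule out the tempting alternative of trading additional insecure jamming for fewer injections, and thereby close the proof.
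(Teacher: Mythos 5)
Your proposal is correct and follows essentially the same route as the paper: fix a cut, apply Lemma~\ref{lemma1} to lower-bound the number of jammed secure edges, use Theorem~\ref{detectablejammmingdesign} for the optimal insecure-side completion when $p_J^{S^c} < p_I/2$, observe that the $k^S_C$-coefficient in Eq.~(\ref{cost2}) is non-negative in Interval~II so the minimum feasible $k^S_C$ is optimal, and then split into Cases A and B to obtain the two weighted-cut problems. Your closing remarks on ruling out allocations outside the $k^S_C$-parameterized family are a welcome extra justification of a step the paper asserts more tersely, but they do not change the argument.
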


Finally, we look at cost interval III with low jamming costs.\\\\
\textbf{Cost Interval III}:$[p_J^{S_c} < p_I/2] \bigcap [p_J^{S}+ p_J^{S_c} < p_I]$\\
As $p_J^{S_c} < p_I/2$ constraint is common to Interval II, the preliminary analysis here is identical to the discussion preceding Eq.~(\ref{cost2}) and leads to the following attack cost:
\begin{align}
p^C &= (p_J^{S}+ p_J^{S^c}-p_I)k^S_C + (p_I-p_J^{S^c})(n^S_C+1) + p_J^{S^c}n^{S^c}_C\label{cost5}
\end{align}
where $k^S_C \geq [n^S_C - n^{S^c}_C + 1]^{+}$ is the number of jammed secured measurements. Observe that the attack cost \emph{decreases} on increasing $k^S_C$ in this Interval. The minimum attack cost is thus obtained when $k^S_C = \max k^S_C = n^S_C$ for both Cases A and B. The optimal attack cost for cut $C$ is given by:
\begin{align}
p^C &= p_J^{S}n^S_C + p_J^{S^c}n^{S^c}_C+ (p_I-p_J^{S^c})~~(\text{for Cases A, B})\label{cost6}
\end{align}
which is an additive constant away from $C$' cut-weight if secure and insecure edges are given weights $p_J^S$ and $p_J^{S^c}$ respectively. The optimal `detectable' generalized attack design is given by the following theorem.
\begin{theorem}\label{detectablegeneralizeddesignIII}
The optimal `detectable' generalized attack in $G_H$ for the cost interval $[p_J^{S_c} < p_I/2] \bigcap [p_J^{S}+ p_J^{S_c} < p_I]$ is given by the optimal solution of the following graph optimization problem:
\item \emph{Problem III}. Give weights of $p_J^{S}$ and $p_J^{S^c}$ to secure and insecure edges respectively in $G_H$ and find the minimum weight cut $C^*$ with non-zero insecure edges. Inject data into one insecure edge and jam all other secure and insecure edges.
\end{theorem}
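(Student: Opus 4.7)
The plan is to mirror the case analysis used in Theorems~\ref{detectablegeneralizeddesignI} and~\ref{detectablegeneralizeddesignII} for Intervals I and II. I fix an arbitrary feasible cut $C$ in $G_H$ with $n^S_C$ secure and $n^{S^c}_C \geq 1$ insecure edges, parametrize the adversarial choice by the number $k^S_C$ of jammed secure cut-edges, minimize the resulting cost first over $k^S_C$ and then over $C$.

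Because the hypothesis $p_J^{S^c} < p_I/2$ is shared with Interval II, after the initial jamming of $k^S_C \geq [n^S_C - n^{S^c}_C + 1]^{+}$ secure edges (required for feasibility by Lemma~\ref{lemma1}), the reduced cut still has a majority of insecure edges, and Theorem~\ref{detectablejammmingdesign} prescribes $(n^S_C - k^S_C + 1)$ insecure edges for data injection plus jamming of the remaining insecure edges. The total cost is then exactly Eq.~(\ref{cost5}). In Interval III we have $p_J^{S}+p_J^{S^c}-p_I < 0$, so the cost is a strictly decreasing affine function of $k^S_C$; it is therefore minimized at the largest admissible value $k^S_C = n^S_C$, i.e.\ jam every secure edge in $C$. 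This choice trivially satisfies the Lemma~\ref{lemma1} lower bound in both Case A and Case B, and substituting into Eq.~(\ref{cost5}) yields Eq.~(\ref{cost6}): $p^C = p_J^{S}n^S_C + p_J^{S^c}n^{S^c}_C + (p_I - p_J^{S^c})$.

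Up to the additive constant $(p_I - p_J^{S^c})$, this is precisely the cut-weight of $C$ under the edge-weighting $(p_J^{S}, p_J^{S^c})$ on secure/insecure edges. Minimizing over all cuts with $n^{S^c}_{C^*} \geq 1$ therefore reduces to the minimum weight cut problem stated as Problem III. Specializing the Interval~II attack recipe at $k^S_C = n^S_C$ shows that the data-injection count collapses to $n^S_C - k^S_C + 1 = 1$, so the final construction injects into a single insecure edge in $C^*$ and jams every other secure and insecure cut-edge, matching the theorem statement.

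The main point that needs care is simply that $k^S_C = n^S_C$ is always permissible and that the non-zero insecure-edge constraint is genuinely required. The first is immediate: after jamming all secure cut-edges, the residual cut is purely insecure and the majority-insecure condition of Theorem~\ref{detectablejammmingdesign} holds trivially, so the reuse of that construction is valid. The second is because a cut composed entirely of secure edges would force a pure-jamming attack, which cannot change the state estimate (no injection vector is introduced) and would in addition destroy observability; hence $n^{S^c}_{C^*} \geq 1$ is both necessary for a feasible detectable attack and compatible with the minimization.
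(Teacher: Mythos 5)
Your proposal is correct and follows essentially the same route as the paper: reuse the Interval~II cost expression Eq.~(\ref{cost5}), observe that the coefficient $(p_J^{S}+p_J^{S^c}-p_I)$ of $k^S_C$ is negative in Interval III so the cost is minimized at $k^S_C=n^S_C$, substitute to obtain Eq.~(\ref{cost6}), and recognize the result as a cut-weight under the $(p_J^{S},p_J^{S^c})$ weighting with a single injected insecure edge. Your added remarks on why $k^S_C=n^S_C$ is always admissible and why $n^{S^c}_{C^*}\geq 1$ is necessary are points the paper leaves implicit, but they do not change the argument.
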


\begin{figure}[ht]
\centering
\includegraphics[width=0.5\textwidth]{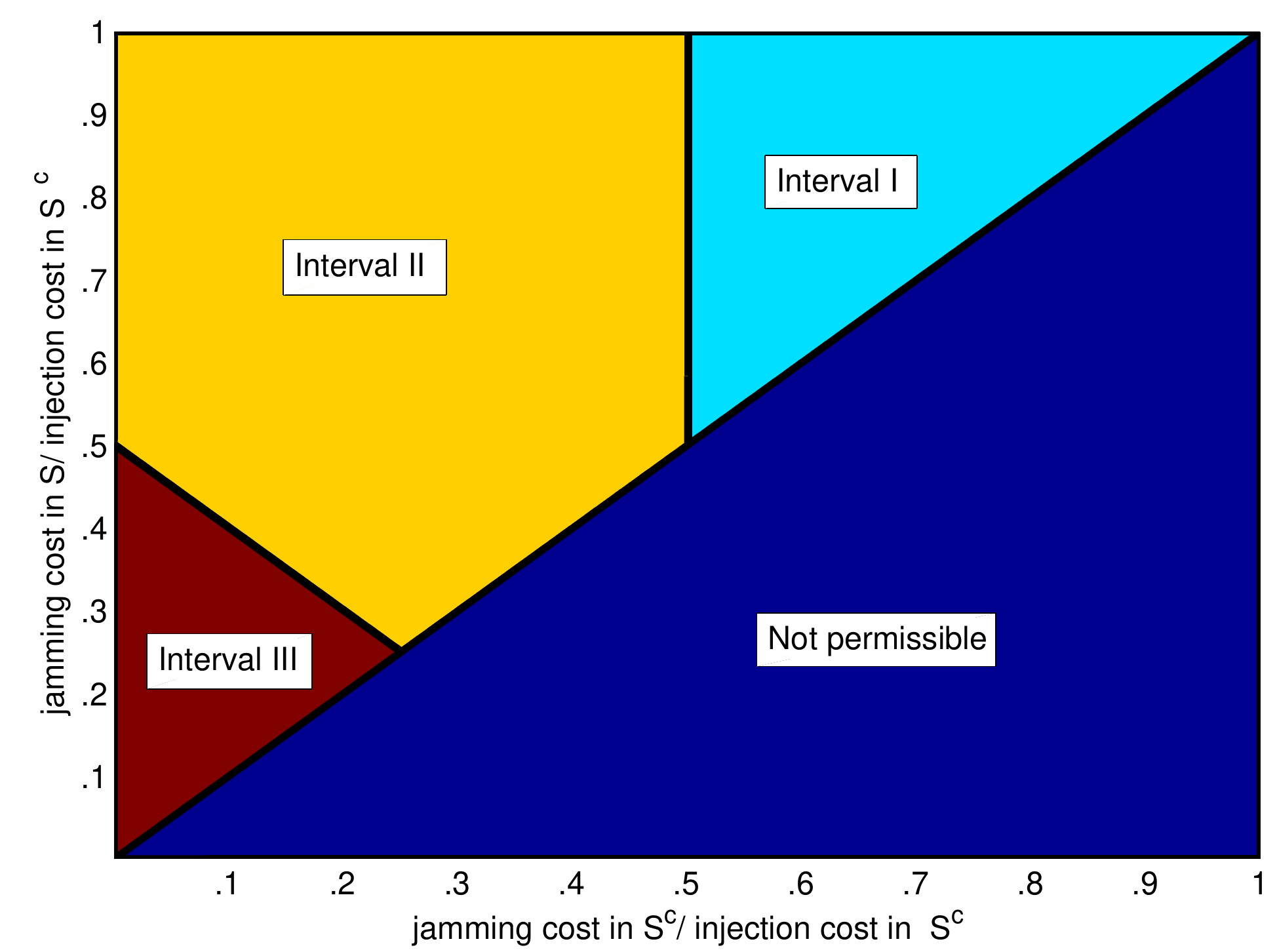}
\caption{Separation of the range of relative costs for jamming secure ($p_J^{S}$) and insecure ($p_J^{S_c}$) measurements into intervals with distinct formulations for optimal `detectable' generalized attack. Interval I denotes $[p_J^{S_c}\geq p_I/2] \bigcap [p_J^{S} \geq p_I/2]$, Interval II denotes $[p_J^{S_c} < p_I/2] \bigcap [p_J^{S}+ p_J^{S_c} \geq p_I]$ and Interval III denotes $[p_J^{S_c} < p_I/2] \bigcap [p_J^{S}+ p_J^{S_c} < p_I]$. The fourth interval $p_J^{S} < p_J^{S_c}$ is not permissible by Assumption $1$.}
\label{fig:regions}
\end{figure}

To summarize, the design of the optimal `detectable' generalized attack can be divided into three intervals that cover the entire range of permissible jamming and data injection costs as shown in Fig.~\ref{fig:regions}. In Internals I (Theorem~\ref{detectablegeneralizeddesignI}) and II (Theorem~\ref{detectablegeneralizeddesignII}), the optimal attack is given by the minimum of two constrained graph-cut problems, while in Interval III (Theorem~\ref{detectablegeneralizeddesignIII}), it is given by the solution of a single problem. The following points are worth noting.
\begin{enumerate}
\item  Problems I-A and II-A pertaining to Case A in Intervals I and II are identical to the sub-problems for designing optimal `detectable' jamming attacks in Theorem~\ref{detectablejammmingdesign}.
\item Problems I-B and II-B pertaining to Case B in Intervals I and II are identical.
\item Problem III in Interval III is identical to the problem of designing optimal `hidden' generalized attacks in Theorem~\ref{hiddengeneralizeddesign}.
\end{enumerate}
The first two observations arise from the constraint $p_J^{S}+ p_J^{S_c} \geq p_I$ in Intervals I and II. This constraint restricts the optimal number of jammed secured measurements at the minimum necessary for feasible attack construction, which is $0$ for cuts with majority of insecure edges. Thus Problems I-A and II-A are similar to the ones in Theorem~\ref{detectablejammmingdesign}. For Interval III, the constraint $p_J^{S}+ p_J^{S_c} < p_I$ implies that the attack cost can be reduced by replacing data injection at one measurement with jamming of a pair of insecure and secure measurements or jamming two insecure measurements. Thus, the optimal `detectable' generalized attack in Interval III includes only one measurement with data injection and is identical to the optimal `hidden' generalized attack in Theorem~\ref{hiddengeneralizeddesign}. 

For all permissible costs as per Assumption $1$, the reduction in attack cost as a result of jamming is shown through simulations in Section~\ref{sec:results}. In addition, the next theorem presents the threat to grid resilience posed by generalized attacks.

\begin{theorem}\label{protection}
\item $1$. A system is vulnerable to generalized data attacks (both `hidden' and `detectable') even if it contains only one insecure measurement.
\item $2$. Addition of new secure measurements alone does not prevent generalized attacks.
\end{theorem}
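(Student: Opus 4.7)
My plan is to exhibit a concrete generalized attack constructed from the singleton cut at one endpoint of the unique insecure edge and then verify its feasibility against the generalized-attack theorems already proven. Let $e=(u,v)$ denote the lone insecure edge of $G_H$ and let $C$ be the set of edges of $G_H$ incident to $u$. Because $H$ is full-rank with rank $n$ on $n+1$ nodes (including the reference node), $G_H$ is connected, so $C$ is a non-empty cut of $G_H$ containing $e$, with $n^{S^c}_C = 1$ and $n^{S}_C = |C|-1$.

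For the `hidden' part of statement $1$, the cut $C$ satisfies the feasibility condition of Theorem~\ref{hiddengeneralizeddesign} (non-zero number of insecure edges). The recipe prescribed there — inject into $e$ and jam the $|C|-1$ secure edges of $C$ — therefore yields a feasible `hidden' generalized attack on a system with exactly one insecure measurement. For the `detectable' part, $C$ falls in Case A if $|C|=1$ (trivial majority of insecure edges) and in Case B otherwise (with $n^S_C \geq n^{S^c}_C$). In either case, whichever of Theorems~\ref{detectablegeneralizeddesignI}, \ref{detectablegeneralizeddesignII}, \ref{detectablegeneralizeddesignIII} applies to the prevailing cost interval produces a feasible attack built from $C$: jam all $|C|-1$ secure edges and inject into $e$ (any additional insecure-edge jamming term in Theorems~\ref{detectablegeneralizeddesignI}, \ref{detectablegeneralizeddesignII} is vacuous because $e$ is the only insecure edge in $C$).

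For statement $2$, I would observe that augmenting the system with additional secure measurements only appends secure edges to $G_H$; it neither removes the insecure edge $e$ nor reclassifies it. The singleton-cut attack from statement $1$, applied to the augmented graph with $C$ now equal to the (possibly larger) set of edges incident to $u$, remains feasible — the newly added secure edges at $u$ merely become further candidates for jamming at cost $p_J^S$. Consequently, no matter how many secure measurements are added, at least one insecure edge suffices for the above construction, so preventing generalized attacks requires securing \emph{every} measurement.

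The main obstacle I foresee is the observability/rank requirement $rank(DH)=n$ implicit in constraint~(\ref{cond2}) and tacitly required by the `hidden' construction as well. I would dispatch it by noting that $e$ is never jammed in the construction above, so the subgraph of $G_H$ obtained by deleting the jammed edges still connects the two sides of $C$ through $e$ while leaving every edge outside $C$ intact. This subgraph therefore spans all $n+1$ nodes in a single connected component, and its incidence matrix still has rank $n$, certifying observability of the reduced system and completing the feasibility argument for both parts.
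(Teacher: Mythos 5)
Your overall route is the same as the paper's: both arguments reduce to the observation that a feasible generalized attack only needs a cut of $G_H$ containing at least one insecure edge, which exists whenever at least one measurement is insecure and cannot be destroyed by appending further secure edges. The paper dispatches this in two sentences by appealing to Theorems~\ref{hiddengeneralizeddesign}--\ref{detectablegeneralizeddesignIII}; you instead exhibit an explicit witness, the star cut $C$ consisting of all edges incident to an endpoint $u$ of the unique insecure edge $e$, and check the case conditions of each theorem. That part, and your treatment of statement~$2$, is sound.

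The gap is in your final paragraph, where you address observability. You claim that after deleting the jammed edges (all of $C$ except $e$) the remaining graph ``spans all $n+1$ nodes in a single connected component.'' For the star cut this is false whenever $u$ is a cut vertex of $G_H$: take a path $w-u-v$ with $e=(u,v)$ insecure and $(w,u)$ secure; your construction jams $(w,u)$, leaving $w$ isolated, so $\mathrm{rank}(DH)=n-1$ and the reduced system is unobservable, violating exactly the rank condition in constraint~(\ref{cond2}) that you yourself identify as the obstacle. The repair is what the cited theorems implicitly do: take the \emph{minimum-weight} cut containing an insecure edge rather than the star cut. For a minimum cut under positive edge weights, each side induces a connected subgraph (if one side split into components $A$ and $B$ with $e$ incident to $A$, the cut separating $A$ from everything else would be a strict subset of the original cut edges, still contain $e$, and be strictly lighter), so deleting all cut edges but $e$ leaves the graph connected and observability is preserved. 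Your star cut still does the job it is really needed for, namely certifying that a cut with a nonzero number of insecure edges exists, which is all the paper's own two-line proof uses.
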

\begin{proof}
Consider the graph $G_H$. As mentioned in Theorems~\ref{hiddengeneralizeddesign},~\ref{detectablegeneralizeddesignI},~\ref{detectablegeneralizeddesignII} and~\ref{detectablegeneralizeddesignIII}, a feasible generalized attack requires a cut in $G_H$ with non-zero number of insecure edges. Such a cut does not exist only if all measurements are secure. Hence the first statement holds. Addition of new secure measurements can increase the attack cost of a cut but does not change its feasibility. Hence the second statement holds.
\end{proof}
It follows from Theorem~\ref{protection} that the prevention of generalized attacks needs all existing insecure measurements to be \emph{replaced} with secure ones, rather than \emph{addition} of new secure measurements. This is a much stricter requirement than that for traditional `hidden' and `detectable' attacks which can be prevented by adding $n$ and $O(m/2)$ new secure measurements respectively \cite{deka1,dekaISGT}. Here, $n$ is the number of buses (excluding `reference' bus) and $m$ is number of measurements in the grid. Thus, our generalized attack framework undermines grid resilience to data attacks and cyber adversaries beyond previously studied attack models. In the next section, we comment on the hardness of designing generalized data attacks and develop approximate iterative algorithms to solve them.

\section{Algorithm For Generalized Attack Construction}
\label{sec:algo}
Consider the graph $G_H$ with sets $S$ and $S^c$ of secure and insecure edges respectively. The adversary is assumed to know the costs associated with jamming an insecure measurement, jamming a secure measurement and injecting data into an insecure measurement, given by $p_J^{S^c}, p_J^{S}$ and $p_I$ respectively. We first discuss algorithm for designing `hidden' generalized attacks.

\textbf{`Hidden' generalized attacks}: By Theorem~\ref{hiddengeneralizeddesign}, the optimal attack of this type is given by the minimum weight cut $C^*$ with non-zero insecure edges in $G_H$, where secure and insecure edges have weight $p_J^S$ and $p_J^{S^c}$ respectively. Algorithm $1$ outputs the optimal attack.

\begin{algorithm}
\caption{Optimal `Hidden' Generalized Attack Design}
\textbf{Input:} Graph $G_H$, Set $S$ ($S^c$) of secure (insecure) edges with edge-weights $p_J^{S}$ ($p_J^{S^c}$)
\begin{algorithmic}[1]
\STATE $i \gets 1, w \gets \infty$  \label{algo1temp}
\WHILE{$i \leq  |S^c|$}\label{algo1cond1}
\STATE Pick $i^{th}$ edge $(s,t)$ in $S^c$.
\STATE $C \gets$ minimum weight `$s-t$' cut separating  $s$ and $t$ in $G_H$
\IF{$w > C$'s weight}
\STATE $w \gets C$'s weight, $C_f \gets C$
\ENDIF
\STATE $i \gets i+1$
\ENDWHILE
\STATE Use $C_f$ for optimal attack in Theorem~\ref{hiddengeneralizeddesign}.
\end{algorithmic}
\end{algorithm}

\textbf{Working and Complexity}: In each iteration of the While Loop (Step~\ref{algo1cond1}), Algorithm $1$ picks an insecure edge in $S^c$ and finds the minimum weight cut $C$ that contains it. The feasible cut $C_f$ is updated if the current cut $C$ has lower weight. At the end of the iteration, the optimal attack is constructed by injecting data into one insecure edge and jamming the rest of the edges in $C_f$. Since, minimum `$s-t$' cut can be computed using max-flow algorithm in $O(nm\log(n^2/m))$ time \cite{maxflow}, Algorithm $1$ has polynomial time complexity of $O(|S^c|nm\log(n^2/m))$. Here $n$ and $m$ are number of nodes and edges in graph $G_H$.


\textbf{`Detectable' generalized attacks}: As analyzed in the previous section, the relative values of costs of jamming and data-injection change the design of `detectable' generalized attacks. Attack construction in Interval III is identical to that of `hidden' generalized attacks and is solved in polynomial time by Algorithm $1$. Here, we discuss the construction of attacks in Intervals I ($[p_J^{S_c}\geq p_I/2] \bigcap [p_J^{S} \geq p_I/2]$) and II ($[p_J^{S_c} < p_I/2] \bigcap [p_J^{S}+ p_J^{S_c} \geq p_I]$). Theorems~\ref{detectablegeneralizeddesignI} and~\ref{detectablegeneralizeddesignII} state that in either interval, the optimal `detectable' generalized attack is determined by solving two constrained graph-cut problems on $G_H$. In each of these problems (I-A, I-B, II-A and II-B), the constraint involves finding a cut $C$ in $G_H$ of Case A($n^{S}_C < |C|/2$) or Case B($n^{S}_C \geq |C|/2$) where $n^S_C$ is the number of secure edges in the cut. Reference \cite{dekaISGT} states that finding a cut where edges of one kind are in majority is equivalent to the NP-hard `ratio-cut' problem \cite{ratio}. Thus, determining the optimal `detectable' generalized attack in Intervals I and II is NP-hard in general.

Now, we provide an approximate algorithm (Algorithm $2$) for solving constrained graph-cut problems of the form included in Theorems~\ref{detectablegeneralizeddesignI} and~\ref{detectablegeneralizeddesignII}. Algorithm $2$ is a generalization of an iterative min-cut based algorithm in \cite{dekasmartgridcomm2015}, with additional constraints. The exact weights for secure and insecure edges and constraint (Case A or B) are specified by the particular problem being solved.

\begin{algorithm}
\caption{`Minimum Weight Constrained Graph-Cut Construction}
\textbf{Input:} Graph $G_H$, Set $S$ and $S^c$ of secure and insecure edges respectively, edge weights and Case (A or B) given by problem (I-A, I-B, II-A or II-B), $\beta,\gamma$ \\
\begin{algorithmic}[1]
\STATE Compute min-weight cut $C$ in $G_H$ \label{step1}
\STATE $w_C \gets C$'s weight
\IF{Case A}
\WHILE {($w_C < \gamma )\&\& (2|C \bigcap S| \geq |C|$)} \label{step2a}
\STATE Pick  $i \in C \bigcap S$, increase weight by $\beta$ \label{step3a}
\STATE Compute min-weight cut $C$ in $G_H$
\STATE $w_C \gets C$'s weight
\ENDWHILE
\IF {$2|C \bigcap S| < |C|$}
\STATE Construct attack for Problem using $C$
\ELSE
\STATE Declare no solution
\ENDIF
\ELSE
\WHILE {($w_C < \gamma) \&\& (|C \bigcap S^c| = 0 \text{~or~} 2|C \bigcap S^c| > |C|$)} \label{step2b}
\IF{$|C \bigcap S^c| = 0$}
\STATE Pick  $i \in C \bigcap S$, increase weight by $\infty$\label{step2_5b}
\ELSE
\STATE Pick $i \in C \bigcap S^c$, increase weight by $\beta$ \label{step3b}
\ENDIF
\STATE Compute min-weight cut $C$ in $G_H$
\STATE $w_C \gets C$'s weight
\ENDWHILE
\IF {$2|C \bigcap S^c| \leq |C|$}
\STATE Construct attack for Problem using $C$
\ELSE
\STATE Declare no solution
\ENDIF
\ENDIF
\end{algorithmic}
\end{algorithm}

\textbf{Working and Complexity:} We describe Algorithm $2$ with graph-cut constraint specified by Case A ($n^{S}_C < |C|/2$). The analysis for Case B follow in a similar way. The edge-weights of secure and insecure edges are specified by Problem I-A or II-A. Step~\ref{step1} computes the minimum weight cut $C$ in $G_H$ and checks if it satisfies the cut constraint in Case A (Step~\ref{step2a}). If the constraint is not satisfied, one secure edge is selected randomly in $C$ and its edge-weight is increased by $\beta$ (Step~\ref{step3a}). Here $\beta$'s value is taken as either $\infty$ or the secure edge-weight for Case A (insecure edge-weight for Case B). Following this, the minimum weight cut is recomputed and checked to see if the cut constraint is satisfied. This process is iterated until a feasible cut is obtained or the cut-weight grows beyond threshold $\gamma$, at which point the algorithm declares no solution.

We discuss the complexity for $\beta = \infty$ and Case A. Here, the algorithm computes a maximum of $|S|$ min-cut computations, one for each secure edge.  Since each min-cut can be computed in $O(|n||m|+|n|^2\log|n|)$ time \cite{mincut}, Algorithm $2$ has a worst-case computational complexity of $O(|S||n||m|+|S||n|^2\log|n|)$ for constraint specified by Case A.

It needs to be noted that the finding the existence of a feasible cut of Case A or B is NP-hard \cite{dekaISGT} and hence obtaining the optimal cut is NP-hard as well. Thus, Algorithm $2$ for optimal attack construction is approximate and may not return a solution for all system configurations. Determining the approximation gap of Algorithm $2$ will depend on approximations of the ratio-cut problem for feasibility and additionally on reducing the cut-size for optimality. In the next section, we present simulation results to justify the good performance of Algorithm $2$  in designing optimal `detectable' generalized attacks.

\section{Results on IEEE test systems}
\label{sec:results}
We discuss the performance of Algorithm $1$  and Algorithm $2$ in designing `hidden' and `detectable' generalized attacks by simulations on IEEE $14$-bus and $57$-bus test systems \cite{testsystem}. In each simulation run, we put flow measurements on all lines and phase angle measurements on $60\%$ (randomly selected) of the system buses. We vary the fraction of secure measurements in either system, and observe its effect on average cost of constructing data attacks as specified by Theorems \ref{hiddengeneralizeddesign}, \ref{detectablegeneralizeddesignI}, \ref{detectablegeneralizeddesignII} and \ref{detectablegeneralizeddesignIII}. We first consider Algorithm $1$ that gives the optimal `hidden' generalized attack as well as `detectable' generalized attack in Interval III. Here, the costs of jamming insecure and secure measurements are taken respectively as $.25$ and $.5$ relative to the cost of injecting data into an insecure measurement, respecting the inequality in Assumption $1$. Fig.~\ref{fig1} presents the trends in average costs of `hidden' injection, `detectable' injection and `hidden' generalized attacks for the IEEE $14$-bus and $57$-bus test systems for configurations where `hidden' injection attacks are feasible. It is clearly observed that adding jamming to the adversarial tools reduces the cost of `hidden' attacks greatly. In fact `hidden' generalized attacks are less expensive than `detectable' injection attacks which on average cost less than $50\%$ of the cost of `hidden' injection attacks \cite{dekaISGT}.

Next we consider Algorithm $2$ and use it to generate `detectable' generalized attacks in Intervals I and II (see Fig.~\ref{fig:regions}). For Intervals I and II specified in Fig.~\ref{fig:regions}, the relative costs of jamming an insecure measurement are respectively taken as $.6$ and $.25$ times the cost of data injection. The relative cost of jamming a secure measurement to that of data injection into an insecure measurement is taken as $.8$ in both intervals, as per Assumption $1$. To show the adversarial advantage of jamming secure measurements, we compare the average costs of `detectable' generalized (DG) attacks with that of `detectable' jamming (DJ) attacks in each case. Fig.~\ref{fig2} presents the average DG and DJ attack costs for the IEEE $14$-bus and $57$-bus test systems in cases with feasible `hidden' injection attacks. It can be observed that though jamming of secure measurements reduces the average attack costs, its effect is more pronounced in Interval I where cost of jamming an insecure measurement is higher. Similarly, Fig.~\ref{fig2} demonstrates the trends in average DG and DJ attack costs for the same systems, but by considering cases with feasible `detectable' injections attacks. Even in this case, the cost improvement in DG over DJ attacks is greater in Interval I.

Note that the rise in attack cost with increase in the fraction of secure measurements in the system is greater in Fig.~\ref{fig3} than in Fig.~\ref{fig1} and Fig.~\ref{fig2}. This disparity is due to the fact that in Figs.~\ref{fig1} and~\ref{fig2}, we only record attack costs for system configurations with feasible `hidden' injection attacks. As the number of such configurations decreases rapidly with increasing number of secure measurements, we end up averaging over fewer configurations leading to lower recorded average attack costs. The number of feasible configurations for `detectable' injection attacks does not decrease as rapidly, hence Fig.~\ref{fig3} has cost curves with steeper slopes in general.
\begin{figure}[ht]
\centering
\includegraphics[width=0.5\textwidth]{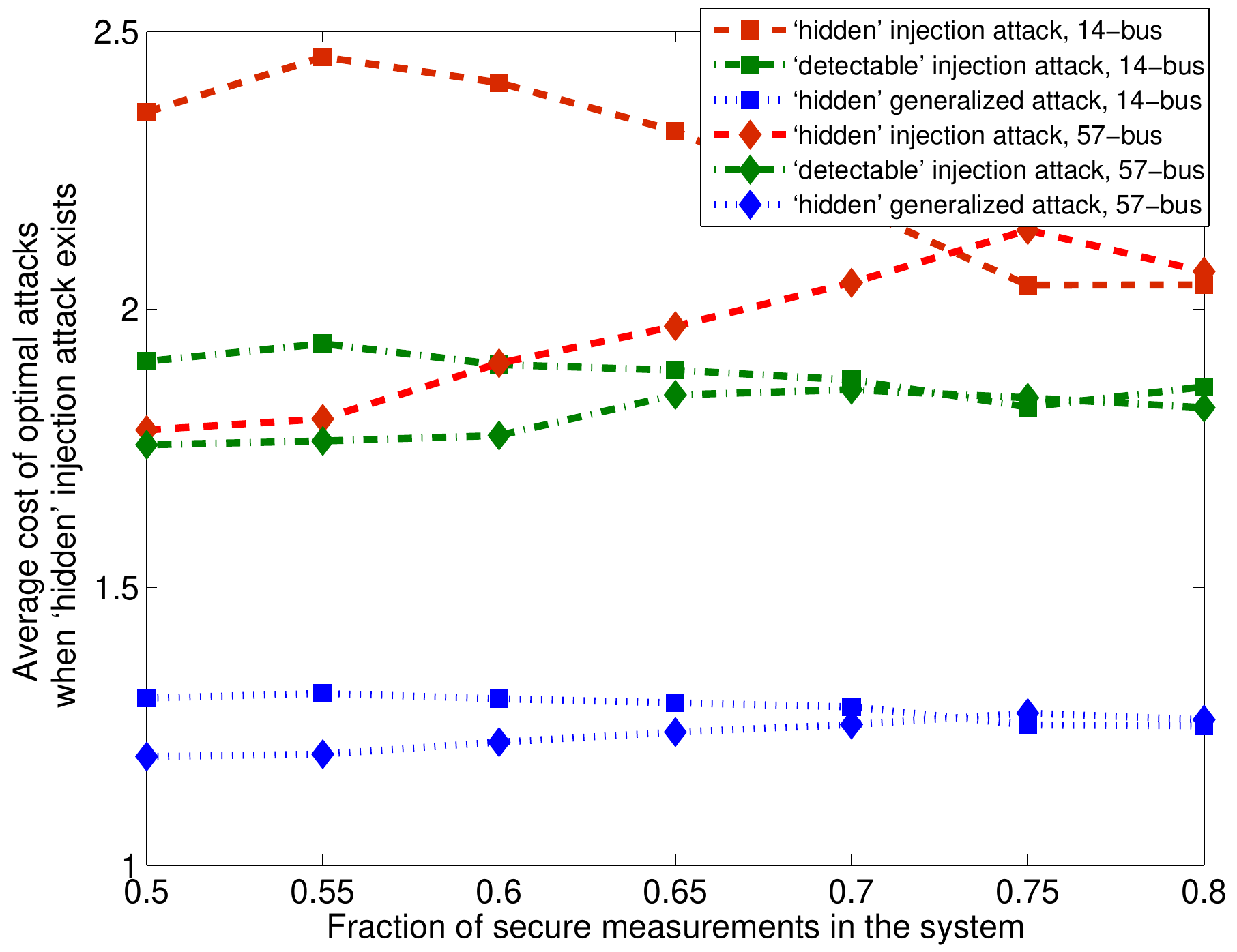}
\caption{Average cost of `hidden' injection, `detectable' injection and `hidden' generalized attacks (when `hidden' injection attack exists) produced by Algorithm $1$ on the IEEE $14$ and $57$ bus test systems with flow measurements on all lines, phasor measurements on $60\%$ of the buses and protection on a fraction of measurements selected randomly. The cost of data injection ($p_I$) is taken as $1$. The costs of jamming an insecure measurement ($p_J^{S^c}$) and a secure measurement ($p_J^S$) are taken as $.25$ and $.5$ respectively.}
\label{fig1}
\end{figure}
\begin{figure}
\centering
\includegraphics[width=0.5\textwidth]{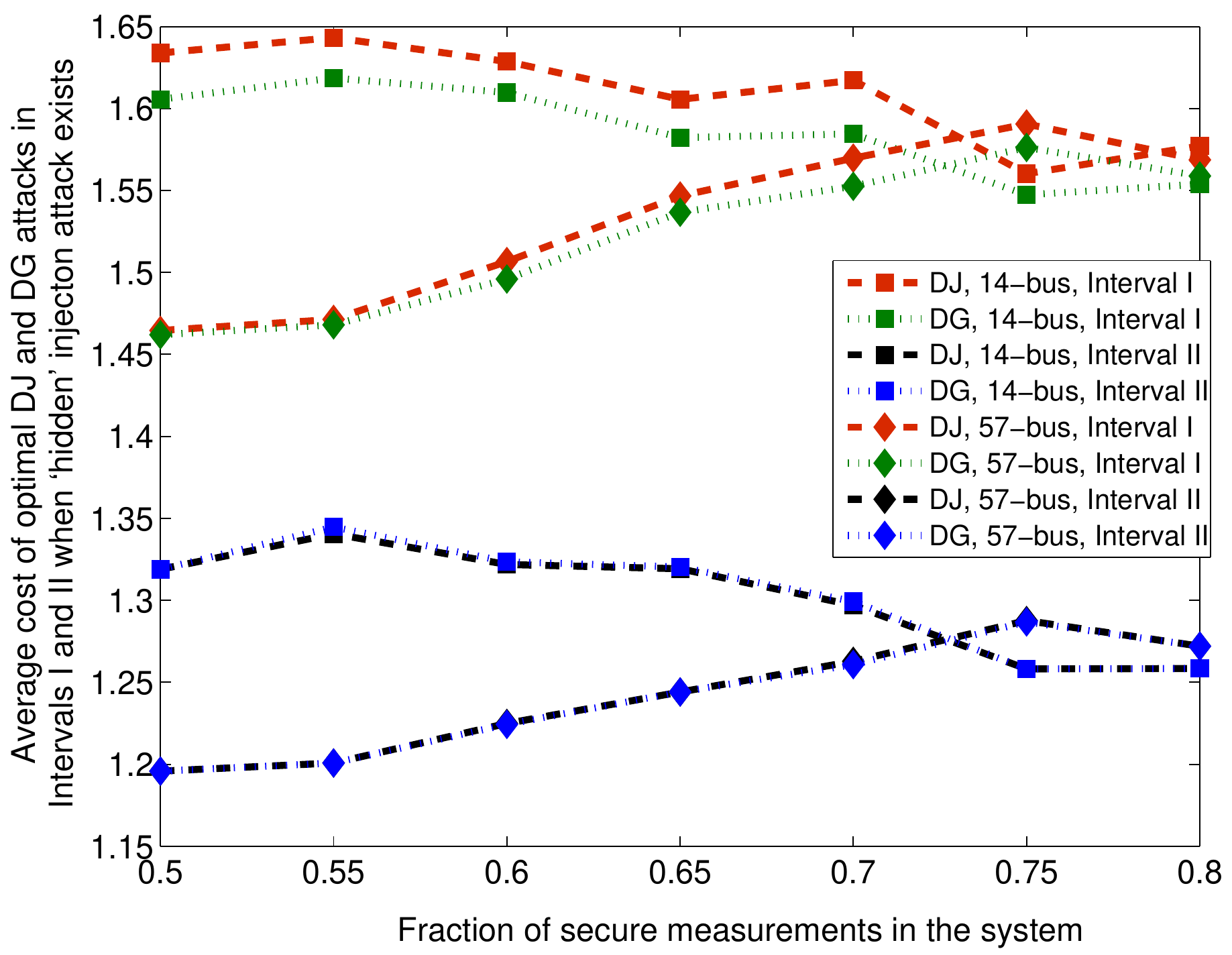}
\caption{Average cost of `detectable' generalized (DG) and `detectable' jamming (DJ) attacks (when `hidden' injection attack exists) in Cost Intervals I and II, produced by Algorithm $2$ (with finite $\beta$) on the IEEE $14$ and $57$ bus test systems with flow measurements on all lines, phasor measurements on $60\%$ of the buses and protection on a fraction of measurements selected randomly. In Interval I and II, the costs of jamming an insecure measurement are taken as $.6$ and $.25$ respectively. The costs of jamming a secure measurement and data injection are taken as $.8$ and $1$ respectively in both intervals.}
\label{fig2}
\end{figure}
\begin{figure}
\centering
\includegraphics[width=0.5\textwidth]{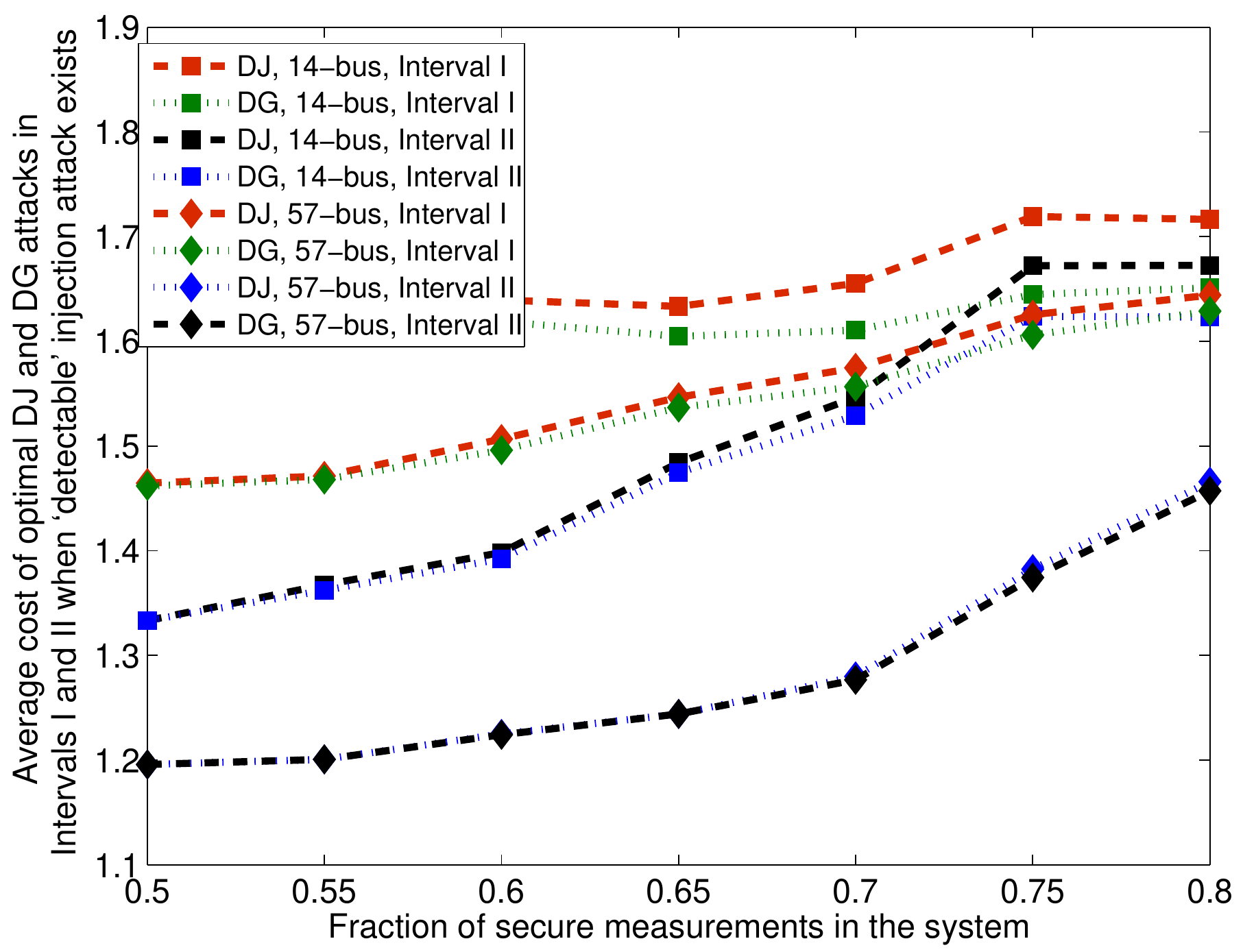}
\caption{Average cost of `detectable' generalized (DG) and `detectable' jamming (DJ) attacks (when `detectable' jamming attack exists) in Cost Intervals I and II produced by Algorithm $2$ (with finite $\beta$) on the IEEE $14$ and $57$ bus test systems with flow measurements on all lines, phasor measurements on $60\%$ of the buses and protection on a fraction of measurements selected randomly. In Interval I and II, the costs of jamming an insecure measurement are taken as $.6$ and $.25$ respectively. The costs of jamming a secure measurement and data injection are taken as $.8$ and $1$ respectively in both intervals.}
\label{fig3}
\end{figure}

\section{Conclusion}
\label{sec:conclusion}
We introduce `generalized' data attacks on state estimation in this paper. In our attack framework, an adversary uses three tools with distinct costs: jamming of encrypted (secure) measurements, data injection and jamming of insecure measurements to optimize the cost and expand the scope of traditional data attacks in literature. We consider both `hidden' and `detectable' data attacks and present novel graph-cut based formulations for construction of optimal generalized attacks of each type. We show that the optimal `hidden' attack with adversarial jamming is given by the minimum weight graph-cut where the edge-weights for secure and insecure measurements are based on the costs of jamming and data injection in the system. We prove that the optimal `hidden' attack with jamming is exactly constructed using a polynomial time min-cut based algorithm. For `detectable' attacks, we show that the entire range of relative costs for data injection and jamming of secure and insecure measurements can be divided into three separate intervals, each with distinct `constrained graph-cut' based optimal attack construction. We present approximate algorithms that use iterative min-cut computations to determine the optimal `detectable' attack in each interval. Due to the ability to jam secure measurements, our generalized framework has very relaxed constraints on attack feasibility compared to traditional models. This reduces the cost of `hidden' and `detectable' attacks as well as increases adversarial immunity against grid security. Specifically, we show that our generalized data attacks are even feasible for systems with a single insecure measurement and hence are not prevented by adding new secure measurements. We present simulation results of our proposed attack framework on IEEE test cases for different costs of adversarial tools and discuss the performance of our algorithms. Jamming of secure measurements indeed severely weakens grid security by reducing attack cost and expanding attack feasibility significantly over that of traditional data attacks. Techniques to efficiently prevent generalized attacks by improving state estimation and theoretical analysis of the performance of our designed approximate algorithms are directions of our future work in this domain.

\end{document}